\documentclass[aps,prl,floatfix,twocolumn,amsfonts,amsmath,amssymb,nofootinbib,reprint,superscriptaddress,longbibliography,noeprint]{revtex4-2}
\usepackage{graphicx} 
\usepackage{xcolor}
\pdfoutput=1
\usepackage{calligra}
 
\usepackage{graphicx}
\usepackage{dcolumn}
\usepackage{bm}
\usepackage[utf8]{inputenc}
\usepackage{booktabs}
\usepackage[colorlinks]{hyperref}
\usepackage[normalem]{ulem}
\usepackage{comment}
\usepackage{xcolor}
\usepackage{amsthm}
\usepackage{mathtools}
\usepackage{bbm}
\usepackage{bm}
\usepackage{graphicx}
\usepackage{enumerate}
\usepackage{tikz}
\usetikzlibrary{arrows}
\usepackage{soul}
\usepackage{csquotes}

\newtheorem{theorem}{Theorem}
\newtheorem{proposition}{Proposition}

\newtheorem{corollary}[]{Corollary}

\newtheorem{definition}{Definition}

\def\ketbra#1#2{\mathinner{|{#1}\rangle\!\langle{#2}|}}

\def\a{\bm{a}}
\def\x{\bm{x}}
\def\w{\bm{w}}
\DeclareMathOperator{\Tr}{Tr}

\renewcommand\L{\mathcal{L}}

\newcommand{\HS}{\mathcal{H}}
\newcommand{\N}{\mathcal{N}}
\newcommand{\A}{\mathcal{A}}

\newcommand{\K}{\mathcal{K}}

\newcommand{\X}{\mathcal{X}}

\newcommand{\I}{\mathcal{I}}

\newcommand{\id}{\mathbbm{1}}

\renewcommand{\a}{\bm{a}}

\begin{document}

\preprint{APS/123-QED}
\title{What makes a causal loop consistent?}
\author{Hippolyte Dourdent}
 \email{hippolyte.dourdent@uni-muenster.de}
\affiliation{ICFO-Institut de Ciencies Fotoniques, The Barcelona Institute of Science and Technology,\\ 08860 Castelldefels, Barcelona, Spain}\affiliation{Department for Quantum Technology, Universität Münster,
Heisenbergstraße 11, 48149 Münster, Germany}

\author{{Andreas Leitherer}}
\affiliation{ICFO-Institut de Ciencies Fotoniques, The Barcelona Institute of Science and Technology,\\ 08860 Castelldefels, Barcelona, Spain}

\author{{Emanuel-Cristian Boghiu}}
\affiliation{Barcelona Supercomputing Center, Plaça d'Eusebi Güell, 1-3 08034, Barcelona, Spain}
\affiliation{Fakult\"at f\"ur Mathematik, Universit\"at Wien, Oskar-Morgenstern-Platz 1, 1090 Vienna, Austria}

\author{{Kyrylo Simonov}}
\affiliation{Fakult\"at f\"ur Mathematik, Universit\"at Wien, Oskar-Morgenstern-Platz 1, 1090 Vienna, Austria}

\author{{Ravi Kunjwal}}
\affiliation{Aix-Marseille University, CNRS, LIS, Marseille, France}

\author{Antonio Acín}
\affiliation{ICFO-Institut de Ciencies Fotoniques, The Barcelona Institute of Science and Technology,\\ 08860 Castelldefels, Barcelona, Spain}
\affiliation{ICREA, Passeig Lluis Companys 23, 08010 Barcelona, Spain}

\date{\today}
\begin{abstract}
Causal loops escape paradoxes if they generate valid probabilities under arbitrary independent local interventions. We show that a deterministic classical process is logically consistent if and only if it contains no signaling loops: cyclic signaling is forbidden but cyclic causation is not, allowing indefinite causal order. Logical consistency is then found to hold exactly when the list of events of a process is pairwise exclusive and complete, providing the first event-based, intervention-free, non-recursive characterization of consistent causal loops. Our results repair a previously introduced intervention-free criterion that misses global loops and reveal that what makes deterministic classical communications without causal order consistent is a multipartite form of Specker's principle: if any two of several questions can be answered jointly, so can all of them.
\end{abstract}

\maketitle

\emph{Introduction.}
 The assumption that physical events unfold within a well-defined causal order underlies both classical and quantum physics. Yet, general relativity admits spacetime geometries containing closed timelike curves (CTCs)~\cite{lanczos1924stationare, godel49}. The latter famously give rise to the \textit{grandfather paradox}, a (hypothetically) physical contradiction induced by the suppression of a cause by its own effect. What principle, then, prevents causal loops from generating such contradictions? Hawking's \textit{chronology protection} conjecture~\cite{hawking} answers this question by forbidding CTCs altogether, whereas Novikov's \textit{self-consistency} principle permits them, provided that events along the loop are self-consistent, i.e.\ never lead to a contradiction~\cite{novikov, novikov1989analysis}. 

 Self-consistency, however, introduces a second puzzle. A causal loop may contain an effect that is also its own cause, producing an apparently self-originating piece of information, the \textit{bootstrap paradox}. Although such loops might be dismissed as a benign underdetermination, information-theoretic considerations show that, under the principle of \textit{freedom of choice}\textemdash the ability of local agents to choose their interventions independently\textemdash bootstrap and grandfather paradoxes become operationally equivalent~\cite{baumeler21}. A satisfactory notion of consistency must therefore exclude not only contradictions, but also tautologies.

This criterion is captured by \textit{logical consistency}, the information-theoretic requirement that a communication process should always generate valid joint probabilities under arbitrary, independent local operations. Surprisingly, logical consistency does not require a fixed global causal order. It permits indefinite causal structures, including correlations that violate causal inequalities~\cite{costa26}. Such \textit{non-causal} behaviors do not solely arise from quantum \cite{oreshkov1} or  probabilistic \cite{kunjwal23} dynamics. They also manifest in fully classical, deterministic single-round communication channels known as \textit{process functions} \cite{baumeler14,baumeler2}.  Process functions without global past \cite{baumeler14,baumeler2,af,baumeler19,tobar,baumeler22,mejdoub25,dourdent25}, such as the canonical tripartite Lugano process \cite{af,baumeler2}, dynamically route information through a mesh of mutual controls of causal structures, such that no party's input remains independent of the others' outputs, while still preserving freedom of choice.

By definition, logical consistency is the mechanism that shields these causal loops from paradoxes. But can it be characterized by the communication events alone, without invoking the agents or checking every group of them?
Its relation to paradoxes is explicitly formalized through \textit{fixed-point uniqueness} (Theorem \ref{th:fixedpoint}) \cite{baumeler16,baumeler19}: logically consistent causal loops must yield a unique fixed point for any local interventions, while grandfather and bootstrap antinomies corresponding precisely to dynamics with zero or multiple fixed points, respectively. However, evaluating fixed points requires checking the induced dynamics for every choice of local interventions. Instead,
\textit{unipartite reducibility} (Theorem \ref{lem:characpf})~\cite{baumeler19} captures the intuitive idea that reducibility under the free local operations of a single agent provides another core representation of logical consistency. It remains however recursive and intervention-dependent. Inspired by the idea that non-causal process functions rely on a non-trivial mesh of classical controls of causal orders, earlier works sought an intervention-free alternative in \textit{output-reducibility} (Proposition \ref{prop:outputred})~\cite{baumeler19,tobar}\textemdash a criterion that, as we show below, turns out to be incomplete.

\medskip

\emph{Results.} First, we reformulate fixed point uniqueness as an explicit \textit{prohibition of any signaling loop}, whether it is \textit{local, internal, or global} (Corollary \ref{th:noloop}). It reveals that output-reducibility is incomplete, as it fails to rule out global loops. We then show how output-reducibility can be rectified by \textit{non-erasing reducibility} (Theorem \ref{th:nonerasingcharac}), based on reductions that preserve dependencies, detecting all signaling loops. Alternatively, it can be completed with a linear \textit{process-matrix} constraint (Theorem \ref{th:outputrednogloop}), also formulated as function-level \textit{phase-erasure} conditions (Theorem~\ref{th:outputredphase}). In fact, as process functions are classical deterministic process matrices, all linear constraints characterizing the latter can be transformed into a \textit{phase erasure characterization} (Theorem~\ref{th:phaseerasure}). This generalizes to arbitrary local dimension the constraints of the Boolean ``parity-erasure'' principle, recently proposed in Ref.~\cite{liu25} as a fundamental principle underlying indefinite causal order.
Finally, we introduce the first purely event-based characterization of process functions (Theorem~\ref{th:characPE}): a process function is valid if and only if it forms a \textit{complete list of pairwise exclusive events}, where two global events are defined as exclusive whenever some party assigns them distinct outcomes for the same local setting. Thus, consistency of a causal loop only requires the communication structure to be well-defined for every tuple of local outcomes. This provides a complete, conceptually parsimonious, non-recursive and intervention-free characterization of logical consistency. It identifies the latter as a multipartite form of Specker's principle \cite{specker,cabello12s}, the motivational basis \cite{kunjwal2019b,chiribella2014,chiribella2016} of the local orthogonality \cite{fritz13} and exclusivity \cite{cabello3} principles, which play a central role in explaining why quantum correlations are not more nonlocal than they are.
\medskip

\emph{Process functions.} Consider $n$ parties, labeled $k\in\{1,\dots,n\}=\mathcal N$. Each party receives classical inputs $x_k\in\mathcal X_k$ and $i_k\in\mathcal I_k$ and produces classical outputs $a_k\in\mathcal A_k$ and $o_k\in\mathcal O_k$. For compactness, we will use the notations $\mathcal{X}:=\bigtimes_{k=1}^n \mathcal{X}_k$, $\mathcal{A}:=\bigtimes_{k=1}^n \mathcal{A}_k$,   $\bm{x}=(x_1,...,x_n)$, $\bm{a}=(a_1,...,a_n)$, $\bm{a}_{\backslash i}:=(a_1,...,a_{i-1},a_{i+1},..,a_n)$ etc. Here\footnote{Our notation is distinct from that adopted in Refs.~\cite{baumeler2,baumeler19,tobar,kunjwal23a,kunjwal23}.}  $\mathcal X_k,\mathcal A_k$ are exchanged with a shared communication resource\textemdash analogous to hidden variables mediating the correlations in a Bell scenario\textemdash while $\mathcal I_k,\mathcal O_k$ are the local setting and outcome registers. The parties are operationally isolated, interacting only through the shared resource; their settings $i_k$ are freely chosen, and each party applies a local stochastic channel $P(a_k,o_k|x_k,i_k)$ exactly once.
The observed correlations are \begin{equation}
    P(\bm{o}|\bm{i}) = \sum_{\bm{a},\bm{x}} \Bigl(\prod_{k=1}^n P(a_k,o_k| x_k,i_k)\Bigr) P(\bm{x}| \bm{a}), \label{eq:correlation2}
\end{equation} where $P(\bm{x}|\bm{a})$ is the classical process implemented by the resource. The process is \textit{logically consistent} if Eq.~\eqref{eq:correlation2} yields a valid distribution, i.e.\ $P(\bm{o}|\bm{i})\ge 0$, $\sum_o P(\bm{o}|\bm{i})=1$, for every choice of local operations.

\begin{definition}[\emph{\textbf{Process function}}] \label{def:pf}
    A deterministic logically consistent process defines a map $\bm{w}$ called the \textbf{\textit{process function}}, composed of a set of local functions $\bm{w}\coloneq (w_k)_k$ where $w_k:\mathcal{A}\rightarrow\mathcal{X}_k$,
which specifies the relation between outputs and inputs in Eq.~\eqref{eq:correlation2} via $P(\bm{x}|\bm{a})=\delta_{\bm{x},\bm{w}(\bm{a})}$, where $\delta_{i,j}$ is the Kronecker delta. Similarly, deterministic local channels, also called \textbf{\textit{local interventions}}, define  functions $\bm{f}:=(f_k)_k$ where $f_k : \mathcal{X}_k \times \mathcal{I}_k \to \mathcal{A}_k \times \mathcal{O}_k$ such that $P(a_k,o_k|x_k,i_k)=\delta_{(a_k,o_k),f_k(x_k,i_k)}$.
\end{definition}

\medskip

\emph{Fixed point uniqueness.} Let us consider the simplified scenario where $|\mathcal{I}_k|=|\mathcal{O}_k|=1$ for all parties $k$. Logical consistency then reduces to $\sum_{\x} \prod_k \delta_{x_k,w_k(\bm{f}(\x))}=1$, equivalently $\sum_{\a}\prod_k\delta_{a_k,f_k(w_k(\a))}=1$ for any local interventions $\bm{f}$. This yields the following characterization:

\begin{theorem}[\emph{\textbf{Fixed point uniqueness}}]\label{th:fixedpoint} A quasi-process function $\bm{w}:\mathcal{A}\rightarrow\mathcal{X}$, that is, a set of local functions $\bm{w}\coloneq (w_k)_k$ where $w_k:\mathcal{A}\rightarrow\mathcal{X}_k$, is a process function if and only if for every  $\bm{f}=(f_1,\ldots,f_n),$ $f_k:\mathcal{X}_k\to \mathcal{A}_k$, the composed map $\w\circ \bm{f}$ has a unique fixed point 
    \begin{align}
    &\forall \bm{f}:\mathcal{X}\rightarrow \mathcal{A},\hspace{1mm} \exists!\hspace{1mm} \bm{x^{f}}\hbox{ s.t. } \forall k,\ x_k^f=w_k(\bm{f}(\bm{x^f})),\label{eq:fixedpoint}
\end{align}
or equivalently, $\bm{f}\circ \w$ has a unique fixed point. 
\begin{proof}
    Eq.~\eqref{eq:fixedpoint} is proven in Refs.~\cite{baumeler16,baumeler19}. 
\end{proof}
\end{theorem}  

 Fixed-point uniqueness excludes both the grandfather antinomy (a collection of interventions $\bm{f}$ is such that $\bm{f}\circ \w$ has no fixed point, a self-contradiction, e.g.\ $a_k=a_k\oplus1$) and the bootstrap antinomy (a collection of interventions $\bm{f}$ is such that $\bm{f}\circ \w$ has several fixed points, a tautology, e.g.\ $a_k=a_k$), which are anyway equivalent under local operations~\cite{baumeler21}. It thus acts as a ``time-policing'' principle compatible with freedom of choice. It relies, however, on every possible local intervention.

\medskip

\emph{Unipartite reducibility.}
Another complete characterization~\cite[Lemma~3]{baumeler19} replaces the global composition $\w\circ \bm{f}$ of Theorem \ref{th:fixedpoint} by a composition with a single local operation $f_k$, yielding a reduced process function.

\begin{definition}\label{def:redpf}
    \textit{\textbf{(Reduced process function)}} 
Let $\w:\mathcal A\to\mathcal X$ be non-self-signaling, i.e.\ $w_k(a)=w_k(\a_{\backslash k})$ for all $k$. For a local operation $f_k:\mathcal X_k\to\mathcal A_k$, the reduced process function $w^{f_k}:\mathcal A_{\backslash k}\to\mathcal X_{\backslash k}$ is obtained by feeding $x_k=w_k(a_{\backslash k})$ into $f_k$ and closing the loop (Fig.~\ref{fig:output_reduced_pf}):
$\bm{w}^{f_k}(\bm{a}_{\backslash k}) := \bm{w}_{\backslash k}(\bm{a}_{\backslash k}, f_k(w_k(\bm{a}_{\backslash k}))).$ 
\end{definition}

    \begin{figure}[ht!]
    \centering
\includegraphics[width=0.9\columnwidth]{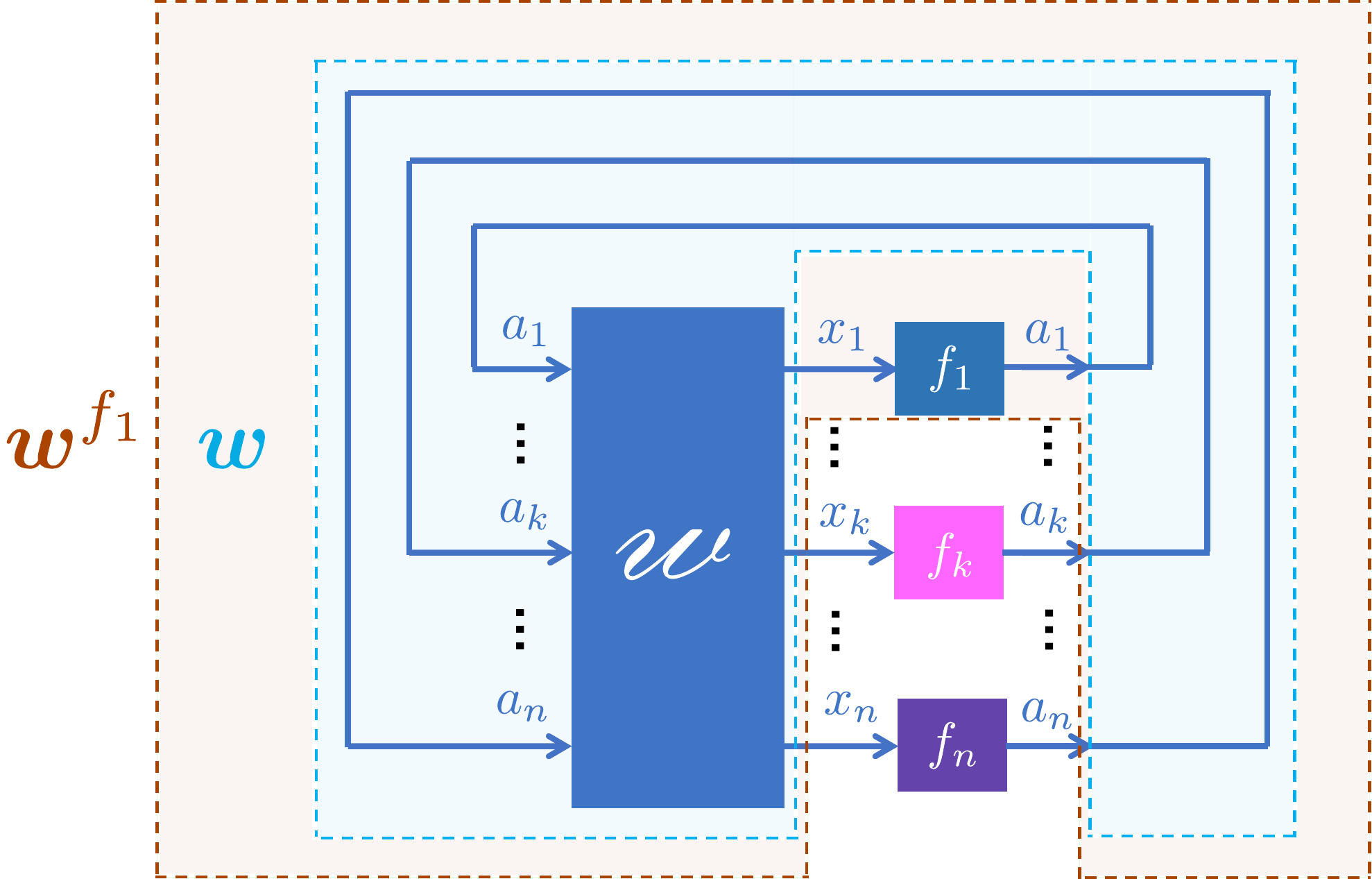}
    \caption{Visualization of the reduced process function as defined in Def.~\eqref{def:redpf} for the case of fixing $f_1$, a local operation of the first party.}    \label{fig:output_reduced_pf}
\end{figure}

\medskip

\begin{theorem}[\emph{\textbf{Unipartite reducibility}}]\label{lem:characpf}   Let $\bm{w}:\mathcal{A}\rightarrow\mathcal{X}$ be a non-self-signaling function. Then $\w$ is an $n-$partite process function if and only if there exists a party $k$ such that $\w^{f_k}$ is a valid $(n-1)-$partite process function for every $f_k$. (Necessity in fact holds for every party $k$.)
\end{theorem}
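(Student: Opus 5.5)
The whole argument runs through Theorem~\ref{th:fixedpoint}: $\w$ is a process function iff for every $\bm f$ the map $\bm f\circ\w$ on $\mathcal A$ has exactly one fixed point. The key step is a bijection between the fixed points of the full $n$-partite loop and those of the reduced $(n-1)$-partite loop. Fix a party $k$, an intervention $f_k$, and interventions $\bm f_{\backslash k}$ for the other parties. I will show that
\begin{equation*}
\a \mapsto \a_{\backslash k}
\end{equation*}
maps the fixed points of $\bm f\circ\w$ bijectively onto the fixed points of $\bm f_{\backslash k}\circ\w^{f_k}$.

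For this, suppose $\a$ is a fixed point of $\bm f\circ\w$. Since $\w$ is non-self-signaling, $a_k=f_k(w_k(\a_{\backslash k}))$. Substituting this into the remaining equations $a_j=f_j(w_j(\a_{\backslash k},a_k))$, $j\neq k$, gives exactly $\a_{\backslash k}=\bm f_{\backslash k}(\w^{f_k}(\a_{\backslash k}))$ by Def.~\ref{def:redpf}. Conversely, any fixed point $\a_{\backslash k}$ of the reduced map extends uniquely to a fixed point of the full map by setting $a_k:=f_k(w_k(\a_{\backslash k}))$. Injectivity holds because $a_k$ is determined by $\a_{\backslash k}$. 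Hence the two loops have the same number of fixed points, for every choice of $f_k$ and $\bm f_{\backslash k}$.

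\emph{Necessity.} Let $\w$ be a process function and $k$ any party. For every $f_k$ and every $\bm f_{\backslash k}$, the full loop has exactly one fixed point, so the reduced loop does too. One point needs checking before Theorem~\ref{th:fixedpoint} can be applied to $\w^{f_k}$: that theorem is stated for quasi-process functions with no assumption of non-self-signaling, so it applies directly. Hence $\w^{f_k}$ is an $(n-1)$-partite process function. This holds for every party $k$, as the parenthetical remark in the statement claims.

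\emph{Sufficiency.} Suppose there is some $k$ such that $\w^{f_k}$ is a process function for every $f_k$. Take an arbitrary $\bm f=(f_k,\bm f_{\backslash k})$. By Theorem~\ref{th:fixedpoint} applied to $\w^{f_k}$, the reduced loop $\bm f_{\backslash k}\circ\w^{f_k}$ has a unique fixed point. By the bijection, $\bm f\circ\w$ has a unique fixed point. Since $\bm f$ was arbitrary, Theorem~\ref{th:fixedpoint} gives that $\w$ is a process function.

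\emph{Edge case and main obstacle.} The base case $n=1$ needs a convention: the reduced map is then the empty process function, which is trivially valid, and non-self-signaling makes $w_1$ constant, so the full loop has a unique fixed point. The main obstacle is the bookkeeping that makes the bijection exact. It relies on non-self-signaling of $w_k$, because otherwise $a_k$ would appear on both sides of its own equation and could not be eliminated. It does not require $\w^{f_k}$ itself to be non-self-signaling, because Theorem~\ref{th:fixedpoint} is phrased for quasi-process functions.
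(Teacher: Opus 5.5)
Your proof is correct and follows essentially the paper's route. The paper proves this theorem only by citing Ref.~\cite{baumeler19}, Lemma~3, but its appendix proof of Theorem~\ref{th:nonerasingcharac} uses the same argument you give: the bijection $\bm a\mapsto\bm a_{\backslash k}$ between the fixed points of $\bm f\circ\bm w$ and those of $\bm f_{\backslash k}\circ\bm w^{f_k}$, which rests on non-self-signaling of $w_k$, combined with Theorem~\ref{th:fixedpoint}.
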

\begin{proof}
    Proven in Ref.~\cite[Lemma~3]{baumeler19}.
\end{proof}
While losing the explicit relation with the prohibition of paradoxes, unipartite reducibility is structurally informative. It expresses the idea that logical consistency is conserved when a party is absorbed into the process. It is however hard to apply, as it is recursive and quantifies over \emph{all} local operations $f_k$.

\medskip

\emph{Output reducibility.} Refs.~\cite{baumeler19,tobar}
proposed a simplification using only constant interventions
$f_k(x_k)=a_k$ for all $x_k$, where party $k$ feeds to the shared communication resource the same value $a_k$ irrespective of the input $x_k$ received from the resource. The resulting \emph{output-reduced function} is
\begin{equation}
  \w^{a_k}(\a_{\backslash k}) := \big(w_j(\a_{\backslash k},a_k)\big)_{j\ne k}.
  \label{eq:output_reduced}
\end{equation}

\begin{proposition}[\emph{\textbf{Output-reducibility}}]\label{prop:outputred}
    A quasi-process function $\w:\mathcal A\to\mathcal X$ is a process function only if $\w^{a_k}$ is a valid $(n{-}1)-$partite process function for every party $k$ and every $a_k\in\mathcal A_k$.
\end{proposition}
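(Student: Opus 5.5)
The plan is to get the necessity direction as a special case of the necessity part of Theorem~\ref{lem:characpf}. The output-reduced function $\w^{a_k}$ of Eq.~\eqref{eq:output_reduced} is exactly the reduced process function $\w^{f_k}$ of Definition~\ref{def:redpf} when $f_k$ is the constant intervention $f_k(x_k)=a_k$ for all $x_k$. Indeed, for constant $f_k$,
\[
\w^{f_k}(\a_{\backslash k})=\w_{\backslash k}\bigl(\a_{\backslash k},f_k(w_k(\a_{\backslash k}))\bigr)=\w_{\backslash k}(\a_{\backslash k},a_k)=\w^{a_k}(\a_{\backslash k}).
\]
Note that this identity does not use non-self-signaling of $w_k$, because the value $w_k(\cdot)$ is discarded. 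Theorem~\ref{lem:characpf} states that necessity holds for every party $k$ and every $f_k$. Once we know $\w$ is non-self-signaling, it therefore follows that $\w^{a_k}$ is a valid $(n-1)$-partite process function for all $k$ and all $a_k$.

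The one gap is that Theorem~\ref{lem:characpf} assumes $\w$ is non-self-signaling, while the proposition is stated for an arbitrary quasi-process function. I would first show that every process function is non-self-signaling, using Theorem~\ref{th:fixedpoint}. Suppose $w_k$ depends on $a_k$, so there are $\a_{\backslash k}$ and $a_k\neq a_k'$ with $w_k(\a_{\backslash k},a_k)=x\neq x'=w_k(\a_{\backslash k},a_k')$. Let every other party $j$ output the constant $a_j$. Party $k$ uses $f_k(x)=a_k'$, $f_k(x')=a_k$, and sends all other inputs to $a_k$. A fixed point $\a^\ast$ of $\bm f\circ\w$ then has $\a^\ast_{\backslash k}=\a_{\backslash k}$, and its $k$-th entry must satisfy $a_k^\ast=f_k(w_k(\a_{\backslash k},a_k^\ast))$. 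Taking $a_k^\ast=a_k$ gives $f_k(x)=a_k'\neq a_k$, and $a_k^\ast=a_k'$ gives $f_k(x')=a_k\neq a_k'$. Any other value $b$ gives $f_k(\cdot)\in\{a_k,a_k'\}\neq b$. So there is no fixed point, contradicting Theorem~\ref{th:fixedpoint}. This is the grandfather construction, and it is the only step needing care: with $|\mathcal A_k|\ge 2$ the swap map is well defined, so the case analysis covers every possible $a_k^\ast$.

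A self-contained alternative that avoids Theorem~\ref{lem:characpf} is to check Theorem~\ref{th:fixedpoint} directly for $\w^{a_k}$. Given interventions $\bm f_{\backslash k}$ for the remaining parties, extend them by the constant $f_k\equiv a_k$. Fixed points $\a$ of $\bm f\circ\w$ must have $k$-th entry $a_k$. Their remaining components are then exactly the fixed points of $\bm f_{\backslash k}\circ\w^{a_k}$, so existence and uniqueness transfer from $\w$ to $\w^{a_k}$. This route is the one I would actually write, because it sidesteps the non-self-signaling issue entirely. I expect no real obstacle beyond stating this bijection between fixed points carefully.
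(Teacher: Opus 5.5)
Your proposal is correct. Your first route is the paper's proof, which is a one-liner: it specializes the necessity part of Theorem~\ref{lem:characpf} to the constant intervention $f_k\equiv a_k$, for which $\bm w^{f_k}=\bm w^{a_k}$. That proof silently relies on process functions being non-self-signaling, which Theorem~\ref{lem:characpf} assumes. You make this explicit with the swap-based grandfather construction, and your construction is correct; it is essentially the one the paper itself uses in the Appendix proof of Theorem~\ref{th:nonerasingcharac}.

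Your preferred second route avoids Theorem~\ref{lem:characpf} entirely. It reads the bijection between fixed points of $\bm f\circ\bm w$ and of $\bm f_{\backslash k}\circ\bm w^{a_k}$ directly off Theorem~\ref{th:fixedpoint}. This is the bijection the paper establishes at the end of that same Appendix for general $f_k$. For constant $f_k$, however, the $k$-th fixed-point equation collapses to $a_k^\ast=a_k$, so no non-self-signaling hypothesis is needed.

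The paper's version is shorter but rests on a cited external lemma plus an unstated fact. Yours is self-contained, uses only fixed-point uniqueness, and makes clear why the necessity statement holds for an arbitrary quasi-process function.
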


\begin{proof}
    Necessity follows directly from Theorem~\ref{lem:characpf}. 
\end{proof}

Crucially, this condition is operationally intervention-free: one can fix outputs without invoking local operations. Nevertheless, output-reducibility is \emph{not} sufficient: it may incorrectly classify certain logically inconsistent quasi-processes~\cite{kunjwal23} as valid. The simplest counterexamples are the Guess-Your-Neighbor's-Input (GYNI)~\cite{almeida2010guess} processes,
$\w:\mathcal A\to\mathcal X$ where each component $w_k$ depends on a single output $a_{j\ne k}$, and every party's output is fed into exactly one other party's input, forming a signaling loop through all parties. A standard tripartite example is (cf. Fig.~\ref{fig_noloop})
\begin{equation}
  \w_{3\text{-GYNI}}:\quad x_1:=a_3,\ \ x_2:=a_1,\ \ x_3:=a_2,
  \label{eq:gyni}
\end{equation}
with straightforward $n$-partite generalizations. Such loops satisfy output-reducibility: fixing any output, say $a_1$, yields the bipartite reduction
$\w^{a_1}_{3\text{-GYNI}}(a_2,a_3)=(x_2=a_1,\,x_3=a_2)$, which is one-way signaling, hence valid, and the same holds for the other reductions. 
However, GYNI processes are manifestly inconsistent. For instance, if all parties apply an identity channel $a_k:=x_k$, Eq.~\eqref{eq:gyni} gives
$x_1=a_3=x_3=a_2=x_2=a_1=x_1$, i.e.\ the bootstrap paradox arises.
Thus, this inconsistency evades detection here.

\medskip
 
In order to understand more generally what makes output-reducibility incomplete, let us define a \textit{signaling loop} on a nonempty subset $\mathcal K\subseteq\N$ as follows. Fix the outputs $\bar\a:={\bm a}_{\N\backslash\mathcal K}$ of all parties outside $\mathcal K$, and choose local interventions $f_k:\mathcal X_k\to\mathcal A_k$ for $k\in\mathcal K$. These induce the closed-loop map $L_{\mathcal K}^{\bar{\bm a},\bm f}:\mathcal A_{\mathcal K}\longrightarrow\mathcal A_{\mathcal K},$
\begin{align}
\bigl(L_{\mathcal K}^{\bar{\bm a},\bm f}(\bm a_{\mathcal K})\bigr)_k
=
\bigl(f_k\!\left(
w_k(\bm a_{\mathcal K},\bar{\bm a})
\right)\bigr)_k.
\end{align}
  We say that $\mathcal K$ forms a  signaling loop whenever this map fails to have exactly one fixed point. This establishes a scale of loops, see Fig.~\ref{fig_noloop}, which can be local ($|\mathcal{K}|=1$, a party signaling to itself), internal ($\mathcal{K}\subsetneq\N$, manifested in a reduced function), and global ($\mathcal{K}=\N$, as in Eq.~\eqref{eq:gyni}). Hence, the following corollary can be derived:

\begin{corollary}
[\emph{\textbf{No-$\K$-signaling loop}}]     \label{th:noloop}
A quasi-process function, i.e.\ any tuple $\w=(w_k)$, $w_k:\mathcal A\to\mathcal X_k$, is a process function if and only if, no subset of  $\mathcal{K}\subseteq\N$ parties form a signaling loop, for any fixed remaining outputs.
 \begin{proof}
     By definition, a signaling loop yields paradoxes. The claim therefore follows directly from Theorem~\ref{th:fixedpoint}, and conversely.
 \end{proof}
\end{corollary}

Forbidding signaling loops do not forbid all kinds of cyclic dependencies. In the Lugano process~\cite{af,baumeler2}, 
\begin{align}
    x_1 := a_3(a_2\oplus1),\quad
  x_2 := a_1(a_3\oplus1),\quad
  x_3 := a_2(a_1\oplus1),\label{eq:lugano} 
\end{align}
   each bipartite signaling relation is controlled by the third party, such that no loop survives. Yet, this process function admits no global past, i.e.\ no $x_k$ is constant. 
  Non-causality of process functions therefore rests not on signaling loops, but on a mesh of classically controlled causal
orders.

\begin{figure}[ht!]
	\begin{center}
\includegraphics[width=1\columnwidth]{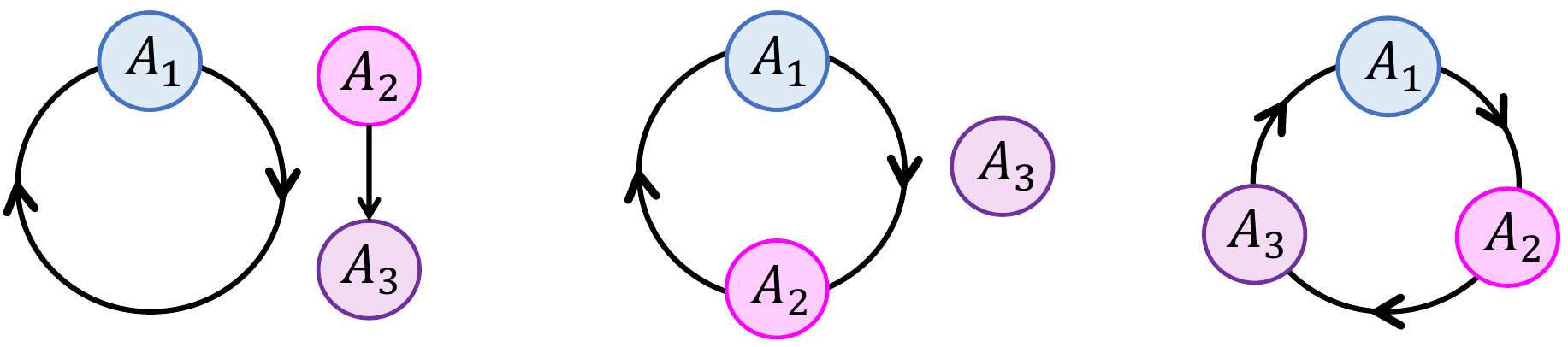}
	\end{center}
	\caption{Examples of three non-valid $n=3$ quasi-processes structures containing (left) a local loop on $A_1$, e.g., $x_1:=a_1$, $x_2:=0 $, $x_3:=a_2$ ; (middle) an internal loop between $A_1$ and $A_2$, e.g., $x_1:=a_2$, $x_2:=a_1$, $x_3:=0$; (right) a global loop, e.g., the GYNI quasi-process $x_1:=a_3$, $x_2:=a_1$, $x_3:=a_2$. For each of these forbidden structures, there exists a set of local interventions that lead to a self-contradiction $a_1=a_1\oplus1$ (e.g. for $f_1:x_1\rightarrow a_1:=x_1\oplus1$, $f_2:x_2\rightarrow a_2:=x_2$ and $f_3:x_3\rightarrow a_3:=x_3$) or a tautology $a_1=a_1$ (e.g. for $f_k:x_k\rightarrow a_k:=x_k$ for all $k$). } 
	\label{fig_noloop}
\end{figure}

Corollary~\ref{th:noloop} sheds light on the incompleteness of output-reducibility. The latter does rule out local and internal loops. A global loop, however, is cut by the very act of fixing an output, and thereby escapes every reduction. This diagnosis comes nevertheless with its cure. 
\medskip

\textit{Non-erasing reducibility.} On the one hand, one can correct output-reducibility at its source, by requiring reducibility under operations that preserve dependencies rather than erase them.

\begin{definition}[\emph{\textbf{Non-erasing operation}}]
\label{def:nonerasing}
A local operation $f_k:\mathcal X_k\to\mathcal A_k$ is
\emph{non-erasing for $\bm w$} if it preserves at least one dependence
on $a_k$: there exist a party $j$, an assignment
$\bm a_{\backslash k}$, and inputs $x_k,x_k'\in\mathcal X_k$ such that 
\begin{align}
     w_j\bigl(f_k(x_k),\bm a_{\backslash k}\bigr)
 \neq
 w_j\bigl(f_k(x_k'),\bm a_{\backslash k}\bigr).
\end{align}
\end{definition}

Two compositions must be distinguished. Testing whether $f_k$ is non-erasing probes the substitution $a_k\mapsto f_k(x_k)$ with
$x_k$ kept free, isolating which dependencies survive.  The
reduction $w^{f_k}$ instead closes the loop, feeding
$x_k=w_k(\a_{\backslash k})$, i.e.\ $a_k\mapsto f_k(w_k(\a_{\backslash k}))$.
This yields:

\begin{theorem}[\emph{\textbf{Non-erasing reducibility}}]\label{th:nonerasingcharac}
A quasi-process function $\w:\mathcal A\to\mathcal X$ is a valid
$n-$partite process function if and only if
\begin{enumerate}
  \item[(i)] for $n=1$, $w$ is constant;
  \item[(ii)] for $n>1$, for every party $k$ and every
  non-erasing operation $f_k:\mathcal X_k\to\mathcal A_k$, the  reduced function $\w^{f_k}:\mathcal A_{\backslash k}\to
  \mathcal X_{\backslash k}$ is well-defined and is a valid $(n{-}1)-$partite process
  function.
\end{enumerate}
\begin{proof}
See Appendix.
\end{proof}
\end{theorem}

\medskip

\textit{Completing output-reducibility.} On the other hand, output-reducibility can be made sufficient by adding a constraint ruling out genuine global signaling loops, i.e.\ non-valid quasi-process that do not contain local nor internal loops. Since process functions are 
equivalent to deterministic classical process matrices, it is natural to consider the latter framework to formalize the missing constraint. By Theorem~4 of Ref.~\cite{baumeler22}, $\w$ can be encoded as a process matrix $W_{\w}$ (Supplemented Material (SM), Eq.~\eqref{eq:classicalPM}). 
In the trace-and-replace notation of Eq.~\eqref{eq:trnrep}, with $A_k^O$ the output space of party $k$, the single linear constraint
\begin{equation}
  {}_{\prod_k[\mathbf 1-A_k^O]}\,W_{\w}=0 ,
  \label{eq:nogloballoop}
\end{equation} forbids
$W_{\w}$ from having component depending jointly on all $n$ outputs. The GYNI process function (Eq.~\eqref{eq:gyni}) violates  Eq.~\eqref{eq:nogloballoop}, while all its output-reductions are valid. Eq.~\eqref{eq:nogloballoop} completes precisely what output-reducibility, which excludes every loop among proper subsets of parties, leaves out: it is violated by any quasi-process whose sole defect is a global loop. It is not, however, a stand-alone diagnostic for global loops. See Appendix.

\begin{theorem}[\emph{\textbf{Output-reducibility and no global loop}}]
\label{th:outputrednogloop}
A quasi-process function $\w:\mathcal A\to\mathcal X$ is a valid $n$-partite
process function if and only if
\begin{enumerate}
  \item[(i)] Proposition \ref{prop:outputred} holds and
  \item[(ii)] ${}_{\prod_k[\mathbf 1-A_k^O]}W_{\w}=0$.
\end{enumerate}
\end{theorem}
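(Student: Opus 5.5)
The plan is to prove both directions by counting fixed points, using Theorem~\ref{th:fixedpoint} in the form $N_{\w}(\bm f):=\sum_{\a}\prod_k\delta_{a_k,f_k(w_k(\a))}=1$ for all $\bm f$. Condition (ii) will be rewritten as a multilinear identity on the local intervention matrices. For $W_{\w}=\sum_{\a}\ketbra{\a}{\a}_O\otimes\ketbra{\w(\a)}{\w(\a)}_I$, the operator ${}_{\prod_k[\mathbf 1-A_k^O]}W_{\w}$ is the part of $W_{\w}$ that is traceless on every output space $A_k^O$. So (ii) should be equivalent to
\begin{equation*}
\sum_{\a}\prod_k T_k(a_k,w_k(\a))=0
\end{equation*}
for all families of matrices $T_k(a_k,x_k)$ with $\sum_{a_k}T_k(a_k,x_k)=0$ for every $x_k$. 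I will check this translation first, directly from Eq.~\eqref{eq:trnrep} and Eq.~\eqref{eq:classicalPM}, since the rest of the argument only uses this function-level form.

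\emph{Necessity.} Part (i) is Proposition~\ref{prop:outputred}. For (ii), write $D_k^{f}(a_k,x_k)=\delta_{a_k,f_k(x_k)}$. The deterministic matrices $D_k^f$ span affinely all matrices whose columns sum to $1$. Their differences therefore span every column-traceless $T_k$.

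Since $N_{\w}$ is multilinear in $(D_1,\dots,D_n)$ and equals $1$ on all deterministic tuples, it is constant on the product of these affine hulls. Taking an iterated finite difference in every slot, with $D_k=D_k^{f}$ versus $D_k^{f'}$, therefore gives $\sum_{\a}\prod_k T_k(a_k,w_k(\a))=0$ for every $T_k$ in the span. This is (ii).

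\emph{Sufficiency.} Let $C_k(a_k,x_k)=1/d_k$, with $d_k=|\mathcal A_k|$, be the uniform average of the constant interventions. Put $T_k:=D_k^{f}-C_k$, which is column-traceless. Expanding $\prod_k(D_k^f-C_k)$ and using (ii) gives
\begin{equation*}
0=\sum_{S\subseteq\N}(-1)^{n-|S|}\,N_{\w}\bigl(D^f_S,C_{\N\backslash S}\bigr).
\end{equation*}
For $S\subsetneq\N$, the term $N_{\w}(D^f_S,C_{\N\backslash S})$ is an average, over constant outputs $\bar{\a}_{\N\backslash S}$, of the fixed-point count of the iterated output reduction of $\w$ under the interventions $f_S$. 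By (i), applied repeatedly through Proposition~\ref{prop:outputred} and Theorem~\ref{th:fixedpoint} for the reduced processes, each such count is $1$. This iteration needs that an output reduction of a valid process is again output-reducible, which is exactly the necessity of Proposition~\ref{prop:outputred} applied to the $(n-1)$-partite process. The case $S=\emptyset$ is the pure count $\sum_{\a}\prod_k 1/d_k=1$, which is consistent with this.

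Hence $N_{\w}(\bm f)=-\sum_{S\subsetneq\N}(-1)^{n-|S|}\cdot1=-(0-1)=1$ for every $\bm f$. Theorem~\ref{th:fixedpoint} then yields validity. In the language of Corollary~\ref{th:noloop}, (i) excludes local and internal loops, and this identity excludes the remaining global one.

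The main obstacle I expect is the first step: matching the trace-and-replace expression precisely to the column-traceless multilinear identity, including the normalisation conventions of $W_{\w}$ and the fact that $\mathbf 1-A_k^O$ acts only on output spaces while inputs remain untouched. A second point needing care is the bookkeeping that makes every term with at least one averaged-constant party an average of counts of genuinely valid reduced process functions. This requires the inductive use of (i) on nested reductions. The case $n=1$ must be handled separately: there (i) is vacuous, and (ii) alone forces $w$ to be constant.
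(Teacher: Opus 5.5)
Your proof is correct, and it takes a different route from the paper's.

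\textbf{The paper's route.} The paper stays inside the process-matrix framework. It takes from the literature that $\w$ is a process function iff the diagonal operator $W_{\w}$ satisfies the whole family ${}_{\prod_{k\in\K}[\mathbf 1-A_k^O]A^{IO}_{\N\backslash\K}}W_{\w}=0$ for all $\emptyset\neq\K\subseteq\N$. This relies on the characterization of valid process matrices together with the correspondence between process functions and classical process matrices. It then notes that tracing out $A_k^{IO}$ amounts to averaging the reduced processes $\w^{a_k}$ over $a_k$. Hence (i) yields every constraint with $\K\subsetneq\N$, and (ii) is the remaining case $\K=\N$.

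\textbf{Your route.} You leave the function level only to translate (ii). Necessity of (ii) comes from multilinearity of $N_{\w}$ and the fact that differences of deterministic interventions span the column-traceless matrices. Sufficiency comes from the inclusion--exclusion expansion of $\prod_k(D^f_k-C_k)$. There the uniform matrix $C_k$ is the classical counterpart of trace-and-replace, and every proper-subset term is an average of fixed-point counts of iterated output reductions, hence equal to $1$.

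\textbf{Comparison.} The underlying mechanism is the same: averaging over a party's outputs is tracing it out, and (ii) settles the single remaining term. Your version is self-contained, resting only on Theorem~\ref{th:fixedpoint} and Proposition~\ref{prop:outputred}. It in effect reproves, for deterministic classical processes, that the linear process-matrix constraints suffice. It also shows explicitly that only averaged counts over the fixed outside outputs are used, consistent with the Appendix remark on averaging versus fixing. The paper's route is shorter and feeds directly into the full phase-erasure family of Theorem~\ref{th:phaseerasure}.

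\textbf{Your translation step.} The step you flagged as the main obstacle does go through. ${}_{[\mathbf 1-A_k^O]}$ is self-adjoint for the trace pairing and projects onto operators with vanishing partial trace on $A_k^O$. Testing (ii) against product operators therefore leaves exactly your identity, with $T_k$ the column-traceless diagonals. Equivalently, it gives $\bigl(\prod_k\mathsf C_k\bigr)F_{\x}=0$ as in the Supplemental Material.

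\textbf{The case $n=1$.} This case needs no separate treatment. The only proper subset is then $S=\emptyset$, and your formula already gives $N_{\w}(\bm f)=1$.
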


\begin{proof}
As $W_{\w}$ is automatically positive and normalized, validity reduces to the linear constraints ${}_{\prod_{k\in\mathcal K}[\mathbf 1-A_k^{O}]\, A_{\N\backslash\mathcal K}^{IO}}W_{\w}=0$ for all
$\emptyset\neq\mathcal K\subseteq\N$~\cite{araujo1,oreshkov16,wechs}. Averaging a reduced process  $\bm w^{a_k}$ over $a_k$ amounts to tracing out the input and output spaces $A_k^{IO}$ of party $k$. Hence
$(i)$ implies the process-matrix constraints for every $\mathcal K\subsetneq\N$, choosing any $k\notin\mathcal K$, while $(ii)$ is the remaining case $\mathcal K=\N$. Conversely, the full family of constraints implies both $(i)$ and $(ii)$. More details can be found in Appendix.
\end{proof}

Because Eq.~\eqref{eq:nogloballoop} is defined on  the  process matrix $W_{\w}$, while $\w$ is purely a classical function, one may ask whether this constraint can be expressed without invoking Hilbert spaces. Indeed, it admits an equivalent, purely classical representation in an explicit \emph{phase-erasure} form:

\begin{theorem}[\emph{\textbf{Output-reducibility and phase erasure}}]\label{th:outputredphase}
A quasi-process function $\w:\mathcal A\to\mathcal X$ is a valid
$n$-partite process function if and only if
\begin{enumerate}
\item[(i)] Proposition \ref{prop:outputred} holds and
  \item[(ii$'$)] for every $\x\in\mathcal X$ and every
  $\bm{\nu}=(\nu_1,\dots,\nu_n)$ with $\nu_k\in\{1,\dots,d_k-1\}$, $d_k:=|\A_k|$,
  \begin{equation}
     \sum_{\bm{a}}
    \delta_{\bm{w}(\bm{a}),\, \bm{x}} \prod_{k=1}^n e^{2\pi i \nu_k\, a_k / d_k} = 0.
    \label{eq:phaseerasure}
  \end{equation}
\end{enumerate}
\begin{proof}
    See SM.
\end{proof}
\end{theorem}

Eq.~\eqref{eq:phaseerasure} is the Fourier transform of Eq.~\eqref{eq:nogloballoop}. For each global setting $\bm x$, the set of outputs producing it carries no Fourier component in which every party has a nonzero frequency. The inputs generated by $\w$ are thus never controlled by all $n$ outputs jointly, and genuine global loops are ruled out.

\medskip

\textit{A complete phase erasure characterization.} Noting that $W_{\w}$ satisfies all process-matrix linear constraints, the same translation applies to each subset $\mathcal K$, yielding a complete family of phase-erasure constraints, which completely characterize process functions:

\begin{theorem}[\emph{\textbf{Phase erasure}}]\label{th:phaseerasure}
A quasi-process function $\w:\mathcal A\to\mathcal X$ is a valid $n$-partite
process function if and only if, for every nonempty
$\mathcal K\subseteq\N$, every $\bm x_{\mathcal K}\in\mathcal X_{\mathcal K}$
and every $(\nu_k)_{k\in\mathcal K}$ with $\nu_k\in\{1,\dots,d_k-1\}$,
\begin{equation}\label{eq:phaserasuretot}
  \sum_{\bm a\in\mathcal A}\ \prod_{k\in\mathcal K}
  \delta_{w_k(\bm a),x_k}\;e^{2\pi i\nu_k a_k/d_k}=0 .
\end{equation}
\begin{proof}
    See SM.
\end{proof}
\end{theorem}

Theorems~(\ref{th:fixedpoint}--\ref{th:phaseerasure}) are multiple faces of the same condition: the deterministic classical special case of the process‑matrix constraints that detects signal loops at every scale. Remarkably, in the Boolean case, $\nu_k=1$ and
Eq.~\eqref{eq:phaserasuretot} reduces to the parity condition
\begin{equation}
  \sum_{\bm a\in\mathcal A}\ \prod_{k\in\mathcal K}\delta_{w_k(\bm a),x_k}\;
  (-1)^{\sum_{k\in\mathcal K}a_k}=0 ,
  \label{eq:parityerasure}
\end{equation}
recovering the constraints on which the \textit{parity-erasure} principle for higher-order processes~\cite{liu25} is based on.

\medskip

All the above criteria ultimately chase signaling loops (and thus paradoxes), in the signaling structure (Corollary~\ref{th:noloop}), through local reductions (Theorems~\ref{th:nonerasingcharac}--\ref{th:outputredphase}) or in the Fourier coefficients (Theorem~\ref{th:phaseerasure}) of $\w$.
All inherit either a recursion over parties or a quantification over their subsets. Is logical consistency expressible without these features?

\medskip

\textit{Pairwise exclusive completeness.}
Here, we find such a criterion, phrased on $\w$ alone, based on a single check over all pairs of events, with no intervention, recursion, nor subset involved.

An \emph{event} is a pair of tuples $(\bm a\,|\,\bm x)\in\mathcal A\times\mathcal X$,
read as ``the parties produce the outputs $\bm a$ upon receiving the inputs
$\bm x$''. Two events are \emph{exclusive} if some party $k$ satisfies
$x_k=x'_k$ and $a_k\neq a'_k$: receiving the same input, it would have to
produce two different outputs, which no local function can do. Exclusive events therefore never occur under a common intervention. A list
$E\subseteq\mathcal A\times\mathcal X$ is \emph{complete} if
$|E|=|\mathcal A|$.

\begin{theorem}[\emph{\textbf{Pairwise-exclusive completeness}}]\label{th:characPE}
A list of events $E\subseteq\mathcal A\times\mathcal X$ is the event list $\{(\bm a\,|\,\w(\bm a))\}_{\bm a\in\mathcal A}$ of an $n$-partite process function if and only if it is complete and pairwise exclusive.
\begin{proof} Two events sharing the same $\bm a$ are never exclusive, so a pairwise exclusive list has pairwise distinct outputs, hence at most $|\mathcal A|$ events. If the list is moreover complete, its outputs exhaust $\mathcal A$, and $E$ is associated with a total function $\w:\mathcal A\to\mathcal X$. Conversely the
event list of a total function is automatically complete. It remains to show that a total quasi-process function $\w$ is a process function if and only if $E$ is pairwise exclusive.

$(\Leftarrow)$ Sufficiency follows from Ref.~\cite[Theorem~3]{dourdent25}. Let $\bm f=(f_k)_k$, $f_k:\mathcal X_k\to\mathcal A_k$, be an arbitrary intervention, and suppose $\bm a\neq\bm a'$ are both fixed points of $\bm f\circ\w$. Exclusivity of the corresponding events gives a party $k$ with $a_k\neq a'_k$ and $w_k(\bm a)=w_k(\bm a')$, hence
\begin{equation}
  a_k=f_k\big(w_k(\bm a)\big)=f_k\big(w_k(\bm a')\big)=a'_k ,
\end{equation}
a contradiction. Every intervention thus admits \emph{at most one} fixed point. Hence the quasi-process function does not generate bootstrap paradoxes. By Ref.~\cite[Theorem~1]{baumeler21}, any total quasi-process function $\w$ that does not suffer from the bootstrap paradox has a unique fixed point. Hence, $\w$ is a process function by Theorem \ref{th:fixedpoint}.

$(\Rightarrow)$  
Let $\bm{w}:\mathcal{A}\rightarrow\mathcal{X}$ be a process function with event list
$E=\{(\bm a|\bm x)\ |\ \bm x=\bm w(\bm a)\}$.
Assume now \textit{ad absurdum} that there exist two events
$(\bm a|\bm x)\neq(\bm a'|\bm x')$ in $E$ violating exclusivity, i.e.\ for every party $k$,
\begin{equation}
  x_k=x'_k \;\Rightarrow\; a_k=a'_k\,
 \text{ and }
  a_k\neq a'_k \;\Rightarrow\; x_k\neq x'_k .
\end{equation}
Let $D=\{k:a_k\neq a'_k\}\neq\varnothing$ because $(\bm{a}|\bm{x})\neq(\bm{a}^\prime|\bm{x}^\prime)$. Define the following intervention for every $k$, 
\begin{equation} \label{eq:def_intervention}
  f_k(x_k)\coloneqq    \begin{cases}a'_k & x_k=x'_k,\\ a_k & \text{otherwise.}\end{cases}
\end{equation}
For $k\in D$ the inputs $x_k\neq x'_k$ are distinct, hence
$f_k(x'_k)=a'_k$ and $f_k(x_k)=a_k$; for $k\notin D$, $f_k$ is the constant map $a_k=a'_k$ and both equalities hold as well. Since
$w_k(\bm a)=x_k$ and $w_k(\bm a')=x'_k$, both $\bm a$ and $\bm a'$ are distinct fixed points of
$\bm f\circ\bm w$ with $\bm f:=(f_k)_k$, contradicting Theorem \ref{th:fixedpoint}. 
 Hence, the process function $\w$ satisfies pairwise exclusivity.
\end{proof}\end{theorem}

\medskip

\begin{table*}[t]
\centering
\setlength{\tabcolsep}{10pt}
\renewcommand{\arraystretch}{1.25}
\begin{tabular}{@{}llcc@{}}
\toprule
\textbf{Characterization} & \textbf{Condition on $\bm w$} &
\textbf{Type} & \textbf{Ref.}\\
\midrule
Fixed-point uniqueness &
$\bm f\circ\bm w$ has a unique fixed point, $\forall\bm f$ &
interventions & \cite{baumeler16}\\
Unipartite reducibility &
$\bm w^{f_k}$ valid for some $k$ and all $f_k$ &
recursive & \cite{baumeler19}\\
\addlinespace[2pt]
\cmidrule(l{0pt}r{0pt}){1-4}
\addlinespace[2pt]
No signaling loop &
no $\mathcal K\subseteq\N$ forms a signaling loop &
signaling structure & Cor.~\ref{th:noloop}\\
Non-erasing reducibility &
$\bm w^{f_k}$ valid $\forall k$ and all non-erasing $f_k$ &
recursive & Thm.~\ref{th:nonerasingcharac}\\
Output-red.\ $+$ no global loop &
$\bm w^{a_k}$ valid $\forall k,a_k$~\cite{baumeler19,tobar}, and
${}_{\prod_k[\mathbf 1-A_k^O]}W_{\bm w}=0$ &
recursive $+$ matrix & Thm.~\ref{th:outputrednogloop}\\
Output-red.\ $+$ phase erasure &
$\bm w^{a_k}$ valid $\forall k,a_k$, and Eq.~\eqref{eq:phaseerasure}
&
recursive $+$ functions & Thm.~\ref{th:outputredphase}\\
Phase erasure &
Eq.~\eqref{eq:phaserasuretot} for all $\emptyset\neq\mathcal K\subseteq\N$ &
functions & Thm.~\ref{th:phaseerasure}\\
Pairwise-exclusive completeness &
$|E|=|\mathcal A|$, all events of $E$ pairwise exclusive &
combinatorial & Thm.~\ref{th:characPE}\\
\bottomrule
\end{tabular}
\caption{Complete characterizations of process functions: those previously
known (top) and those established here (bottom). Only phase erasure and
pairwise-exclusive completeness avoid recursion, and only the latter is a
condition on pairs of events, hence testable in parallel.}
\label{tab:characterizations}
\end{table*}

\textit{Discussion.} The characterizations of process functions brought together in Table~\ref{tab:characterizations} resolve the central question of this work on three distinct levels. 

 On one hand, logical consistency requires forbidding closed signaling loops across all scales (Corollary~\ref{th:noloop}). Notably, this restriction against cyclic signaling does not enforce global causality: cyclic dependence may remain viable if it is mediated by classical control, where each party’s causal placement is dynamically selected by prior outputs (Eq.~\eqref{eq:lugano}).
 
 On the other hand, traditional criteria establish validity based on interventions or a quantification over subsets of parties. In contrast, pairwise-exclusive completeness (Theorem~\ref{th:characPE}) bypasses this requirement: a causal loop is valid precisely when its allowed events form a partition of the output space into mutually exclusive outcomes. By eliminating any direct reliance on loops, interventions, or recursive calls, this property is the simplest to state, inherently non-sequential and thus straightforward to parallelize. Because pairwise exclusivity always holds for complete Boolean event lists, this drastically simplifies exhaustive enumeration\textemdash retrieving the  $12$ bipartite and $744$ tripartite Boolean process functions ($64$ of which possess no global past) corresponding to the deterministic extremal points of the classical processes polytope \cite{baumeler2}.

This criterion recovers the correspondence between process functions and unambiguous complete product bases~\cite{dourdent25}, which specify a list of events that is pairwise exclusive (by unambiguity) and complete. With it comes the relation between process functions and non-signaling inequalities of the local orthogonality type~\cite{dourdent25,fritz13}. Pairwise exclusive completeness is in fact a \textit{structural}~\cite{kunjwal2019b} reading of Specker's principle~\cite{specker,cabello12s}\textemdash ``if you have several questions and you can answer any two of them, then you can also answer all of them''\textemdash  in a Bell scenario, i.e.\ a structural version of the local orthogonality principle~\cite{fritz13,sainz14}. 
Known as \textit{orthocoherence}, it also plays a key role in the discussion of tensor products in quantum logic~\cite{foulisrandall81,coecke2000,bacciagaluppi2023}.
As we show in forthcoming work, pairwise exclusive completeness anchors in fact process functions within the hypergraph-theoretic language of contextuality scenarios~\cite{acin}. 

These verifiable membership criteria enable a more systematic study of logically consistent classical processes with indefinite causal order. They may, for instance, help establish the conjecture of
Ref.~\cite{tselentis23} that a ``siblings-on-cycles'' property suffices for a causal structure to be admissible. Moreover, the physical realizability of these consistent loops remains fundamentally tied to their spacetime embeddability~\cite{vilasini24}. Since process matrices\textemdash and by extension process functions\textemdash form a linear subset  of both post-selected closed timelike curves~\cite{politzer,bennett,svetlichny} and pre-and-post-selected quantum states~\cite{silva}, process functions can be simulated via quantum post-selection~\cite{araujo3}. Nevertheless, identifying the precise physical mechanisms that realize the full class of logically consistent classical loops\textemdash be it  non-trivial spacetime topologies or dynamical feedback architectures\textemdash remains a compelling direction for future research.

\medskip

\emph{Acknowledgments.} We thank Ognyan Oreshkov, Alastair Abbott, Cyril Branciard, Pierre Pocreau, Marco T\'ulio Quintino, Nasra Daher Ahmed, Alexei Grinbaum, Kuntal Sengupta, \"Amin Baumeler, and Zixuan Liu for enlightening discussions. We are also grateful to Fabio Costa, whose input helped us identify a loophole in an earlier version of our complete output‑reducibility proposal and develop the non‑erasing reducibility characterization (Theorem~\ref{th:nonerasingcharac}). We acknowledge the use of the AI assistant Claude (Anthropic), interaction with which helped identifying the loophole mentioned above, developing the non‑erasing reducibility characterization and proving Theorem~\ref{th:outputredphase}. The assistant was further used to improve the English of the manuscript. All mathematical statements and proofs were independently derived, checked, and are the sole responsibility of the authors.

H.D., A.L., and A.A. acknowledge financial support from the Government of Spain (Severo Ochoa CEX2019-000910-S, NextGenerationEU PRTR-C17.I1 and FUNQIP), Fundació Cellex, Fundació Mir-Puig, Generalitat de Catalunya (CERCA program). A.A. was also supported by the ERC AdG CERQUTE, the AXA Chair in Quantum Information Science. 
K.S., A.L., and E.C.B. acknowledge: This research was funded in whole or in part by the Austrian Science Fund (FWF) 10.55776/PAT4559623. For open access purposes, the author has applied a CC BY public copyright license to any author-accepted manuscript version arising from this submission. 
A.L. acknowledges the European Union (PASQuanS2.1, 101113690). 
R.K. acknowledges: This work received support from the French government under the France 2030 investment plan, as part of the Initiative d’Excellence d’Aix-Marseille Université-A*MIDEX, AMX-22-CEI-01.

\appendix

\section{Proof of Theorem \ref{th:nonerasingcharac}}\label{app:noerase}

For $n=1$, the statement follows directly from Theorem \ref{th:fixedpoint}: a unipartite process function is necessarily constant.
Let $n>1$. If $\bm w$ is a process function, then every reduction of
$\bm w$ by a local operation is again a process function (Theorem \ref{lem:characpf}). This holds,
in particular, for every non-erasing operation, proving the forward
implication.

For the converse we show the contrapositive: if the quasi-process $\bm w$ is not a process function, some non-erasing $f_k$ has a reduction $\bm w^{f_k}$ that is not well defined or not a valid $(n-1)-$partite process function, i.e.\ violates $(ii)$.

Assume first that the non-valid quasi-process $\w$ is self-signaling, i.e.\ there exists $k$ such that $x_k$ depends on $a_k$, i.e.\ $ \exists\ \a_{\backslash k}, a_k\neq a_k'$ such that
$x_k:=w_k(a_k,\bm a_{\backslash k})\;\neq\;x_k':=w_k(a_k',\bm a_{\backslash k}).$
Take a function $f_k$ such that $f_k(x'_k):=a_k$ and $f_k(x_k):=a_k'$ for $x_k\neq x'_k$. Then $f_k$ is non-erasing
(its image is $\{a_k,a_k'\}$, on which $w_k$ is non-constant), yet the one-party loop
$b=f_k(w_k(b,\bm a_{\backslash k}))$ has no solution: it takes values in $\{a_k,a_k'\}$,
and $a_k\mapsto a_k'$, $a_k'\mapsto a_k$. Hence $\bm w^{f_k}$ is not well defined and
$(ii)$ fails.

Assume now that the non-valid quasi-process $\w$ is non-self-signaling, i.e.\ every $\bm w^{f_k}$ is well-defined.
By Theorem~\ref{th:fixedpoint}, there exists a collection of local
operations $\bm f=(f_k)_{k\in\mathcal N}, \,\,  f_k:\mathcal X_k\to\mathcal A_k$, such that $\bm f\circ\bm w$ has no fixed point or several. Let us find a non-erasing operation $f_k$ such that $\bm f_{\backslash k}\circ\bm w^{f_k}$ has also no fixed point or several.

Let $\mathcal B_k:=\operatorname{Im}f_k$ (a proper subset of $\mathcal A_k$ in general, as $f_k$ need not be surjective) and $\mathcal B:=\prod_k\mathcal B_k$. Since $\bm f\circ\bm w$ takes values in $\mathcal B$, all its fixed points lie in $\mathcal B$, and $(\bm f\circ\bm w)|_{\mathcal B}$ has the same fixed points; in particular it still fails to have exactly one.

Let us assume that among the considered collection of $\bm f$, no $f_k$ is non-erasing for $\bm w$.  Intuitively, each $f_k$ then severs party $k$'s outgoing dependences, so that no signaling loop, and hence no paradox, can be detected by reducing with it. The precise definition follows from the negation of  Definition~\ref{def:nonerasing}: for every $j$, every $k\neq j$, every $\bm a_{\backslash k}$ and all $x_k,x_k'$, $w_j(f_k(x_k),\bm a_{\backslash k,j})=w_j(f_k(x_k'),\bm a_{\backslash k,j})$, where we have used $w_j(\bm a)=w_j(\bm a_{\backslash j})$ from non-self-signaling. Since $f_k(x_k)$ ranges over $\mathcal B_k$, this means that $a_k\mapsto w_j(a_k,\bm a_{\backslash k,j})$ is constant on $\mathcal B_k$ for every $\bm a_{\backslash k,j}\in\mathcal B_{\backslash k,j}$. Because $\mathcal B=\prod_k\mathcal B_k$, $\bm w$ is constant on $\mathcal B$. Therefore $(\bm f\circ\bm w)|_{\mathcal B}$ is constant and always has a unique fixed point, contradicting our assumption on $\bm f$. We conclude that some $f_k$ is non-erasing for $\bm w$.

Finally, let us show that the map $\bm a\mapsto\bm a_{\backslash k}$ is a bijection from the fixed points of $\bm f\circ\bm w$ onto those of $\bm f_{\backslash k}\circ\bm w^{f_k}$. 
A tuple $\bm a$ is a fixed point of $\bm f\circ\bm w$ if and only if
$a_k=f_k\bigl(w_k(\bm a_{\backslash k})\bigr)$ and $\bm a_{\backslash k}=\bm f_{\backslash k}\bigl(\bm w_{\backslash k}(a_k,\bm a_{\backslash k})\bigr).$
The first equation determines $a_k$ from $\bm a_{\backslash k}$ alone, so the map is injective; substituting it into the second gives exactly $\bm a_{\backslash k}=\bm f_{\backslash k}\bigl(\bm w^{f_k}(\bm a_{\backslash k})\bigr)$, so the image is contained in the fixed-point set of $\bm f_{\backslash k}\circ\bm w^{f_k}$. Conversely, given such a fixed point $\bm a_{\backslash k}$, setting $a_k:=f_k(w_k(\bm a_{\backslash k}))$ recovers a fixed point of $\bm f\circ\bm w$.

Hence, $\bm f_{\backslash k}\circ\bm w^{f_k}$ has the same number of fixed points as $\bm f\circ\bm w$, which is not unique by assumption. By Theorem~\ref{th:fixedpoint}, $\bm w^{f_k}$ is not a valid $(n-1)$-partite process function, and $(ii)$ fails.
\qed

\section{Process-matrix constraints and output-reducibility (Theorem \ref{th:outputrednogloop})}
\label{rem:averagefix}

The process-matrix constraint indexed by
$\mathcal K\subsetneq\mathcal N$ traces out the complementary systems
$A_j^{IO}$ and hence \emph{averages} over the outputs $a_j$,
$j\notin\mathcal K$. By contrast, output-reducibility (Proposition~\ref{prop:outputred}), requires validity for every \emph{fixed} assignment of these outputs. Indeed, for any $k_0\notin\mathcal K$, the sum over $a_{k_0}$ in Eq.~\eqref{eq:pmtolin} can be split into separate contributions, one for each value of $a_{k_0}$. Each contribution is precisely the output-reducibility constraint evaluated on the reduced function $\bm w^{a_{k_0}}$. 
Consequently, output-reducibility implies every process-matrix constraint indexed by a proper subset $\mathcal K\subsetneq\mathcal N$.

The converse does not hold constraint by constraint, because defects arising at different fixed outputs may cancel under averaging. For example, the $\mathcal K=\mathcal N$ process-matrix constraint Eq.~\eqref{eq:nogloballoop} may hold despite the presence of a global loop, in quasi-process also containing internal or local loops.
For example, $x_1=a_2\oplus a_3,\,\, x_2=a_1\oplus a_3,\,\, x_3=a_1\oplus a_2$ contains a global signaling loop, although Eq.~\eqref{eq:nogloballoop} is satisfied. However, it also contains internal loops: e.g.\ fixing $a_3=0$ yields
$x_1=a_2,\,\, x_2=a_1$. Output-reducibility detects the defect because it tests this reduced function separately.

Thus, the whole set of process-matrix constraints and output-reducibility completed with Eq.~\eqref{eq:nogloballoop} encode the same validity requirement when imposed as complete families, but their individual conditions do not correspond term by term. A process-matrix constraint indexed by $\mathcal K$ averages over the outputs of the complementary parties $\mathcal N\backslash\mathcal K$, whereas output-reducibility fixes those outputs and tests each resulting reduced function separately.

\bibliography{bibli}

@misc{af,
      note={M. Araújo and A. Feix, private communication (2014),
the process was communicated to Baumeler before it was
found by inspecting the extremal points of the non-causal
polytope characterized in Baumeler and Wolf \cite{baumeler2}},
}

@article{kunjwal23a,
  title = {Trading Causal Order for Locality},
  author = {Kunjwal, Ravi and Baumeler, \"Amin},
  journal = {Phys. Rev. Lett.},
  volume = {131},
  issue = {12},
  pages = {120201},
  numpages = {8},
  year = {2023},
  month = {Sep},
  publisher = {American Physical Society},
  doi = {10.1103/PhysRevLett.131.120201},
  url = {https://link.aps.org/doi/10.1103/PhysRevLett.131.120201}
}

@article{tselentis23,
   title="{Admissible Causal Structures and Correlations}",
   volume={4},
   ISSN={2691-3399},
   url={http://dx.doi.org/10.1103/PRXQuantum.4.040307},
   DOI={10.1103/prxquantum.4.040307},
   number={4},
   journal={PRX Quantum},
   publisher={American Physical Society (APS)},
   author={Tselentis, Eleftherios-Ermis and Baumeler, {\"A}min},
   year={2023},
   pages={040307},
   month=oct }

@misc{mejdoub25,
      title={Unilateral determination of causal order in a cyclic process}, 
      author={Ilyass Mejdoub and Augustin Vanrietvelde},
      year={2025},
      eprint={2506.18540},
      archivePrefix={arXiv},
      primaryClass={quant-ph},
      url={https://arxiv.org/abs/2506.18540}, 
}

@article{baumeler14,
  title = {Maximal incompatibility of locally classical behavior and global causal order in multiparty scenarios},
  author = {Baumeler, \"Amin and Feix, Adrien and Wolf, Stefan},
  journal = {Phys. Rev. A},
  volume = {90},
  issue = {4},
  pages = {042106},
  numpages = {7},
  year = {2014},
  month = {Oct},
  publisher = {American Physical Society},
  doi = {10.1103/PhysRevA.90.042106},
  url = {https://link.aps.org/doi/10.1103/PhysRevA.90.042106}
}

@article{araujo1,
   author = {{Ara{\'u}jo}, M. and {Branciard}, C. and {Costa}, F. and {Feix}, A. and 
	{Giarmatzi}, C. and {Brukner}, {\v C}.},
    title = "{Witnessing causal nonseparability}",
  journal = {New Journal of Physics},
     year = 2015,
    month = oct,
   volume = 17,
   number = 10,
      eid = {102001},
    pages = {102001},
      doi = {10.1088/1367-2630/17/10/102001},
 }

@ARTICLE{oreshkov1,
   author = {{Oreshkov}, O. and {Costa}, F. and {Brukner}, {\v C}.},
    title = "{Quantum correlations with no causal order}",
  journal = {Nat. Commun.},
     year = 2012,
    month = oct,
   volume = 3,
      eid = {1092},
    pages = {1092},
      doi = {10.1038/ncomms2076},
}

@article{oreshkov16,
	author = {Ognyan Oreshkov and Christina Giarmatzi},
	doi = {10.1088/1367-2630/18/9/093020},
	isbn = {1367-2630},
	journal = {New J. Phys.},
	number = {9},
	pages = {093020},
	title = {Causal and causally separable processes},
	ty = {JOUR},
	volume = {18},
	year = {2016}}

@article{novikov1989analysis,
  title={Analysis of the operation of a time machine},
  author={Novikov, Igor Dmitriyevich},
  journal={Zh. Eksp. Teor. Fiz.},
  volume={95},
  pages={769--776},
  year={1989}
}

@article{novikov,
  title = {Cauchy problem in spacetimes with closed timelike curves},
  author = {Friedman, John and Morris, Michael S. and Novikov, Igor D. and Echeverria, Fernando and Klinkhammer, Gunnar and Thorne, Kip S. and Yurtsever, Ulvi},
  journal = {Phys. Rev. D},
  volume = {42},
  issue = {6},
  pages = {1915--1930},
  numpages = {0},
  year = {1990},
  month = {Sep},
  publisher = {American Physical Society},
  doi = {10.1103/PhysRevD.42.1915},
  url = {https://link.aps.org/doi/10.1103/PhysRevD.42.1915},
}

@article{chiribella08,
	archiveprefix = {arXiv},
	author = {Chiribella, Giulio and D'Ariano, Giacomo Mauro and Perinotti, Paulo},
	doi = {10.1103/PhysRevLett.101.060401},
	eprint = {0712.1325},
	issue = {6},
	journal = {Phys. Rev. Lett.},
	month = {Aug},
	numpages = {4},
	pages = {060401},
	primaryclass = {quant-ph},
	publisher = {American Physical Society},
	title = {Quantum Circuit Architecture},
	url = {https://link.aps.org/doi/10.1103/PhysRevLett.101.060401},
	volume = {101},
	year = {2008}}

@article{chiribella09,
	adsurl = {http://adsabs.harvard.edu/abs/2009PhRvA..80b2339C},
	archiveprefix = {arXiv},
	author = {Giulio Chiribella and Giacomo Mauro D'Ariano and Paulo Perinotti},
	doi = {10.1103/PhysRevA.80.022339},
	eid = {022339},
	eprint = {0904.4483},
	journal = {Phys. Rev. A},
	month = aug,
	number = 2,
	pages = {022339},
	primaryclass = {quant-ph},
	title = {{Theoretical framework for quantum networks}},
	volume = 80,
	year = 2009}

@article{choi75,
	author = {Man-Duen Choi},
	doi = {10.1016/0024-3795(75)90075-0},
	journal = {Linear Algebra Appl.},
	pages = {285--290},
	title = {Completely positive linear maps on complex matrices},
	volume = {10},
	year = {1975}}

@article{fritz13,
	doi = {10.1038/ncomms3263},
	url = {https://doi.org/10.1038/ncomms3263},
	year = 2013,
	month = {aug},
	publisher = {Springer Science and Business Media {LLC}},
	volume = {4},
	number = {1},
	author = {T. Fritz and A.B. Sainz and R. Augusiak and J Bohr Brask and R. Chaves and A. Leverrier and A. Ac{\'{\i}}n},
	title = {Local orthogonality as a multipartite principle for quantum correlations},
	journal = {Nat. Commun.},
    pages = {2263}
}

@article{kunjwal2019b,
  doi = {10.22331/q-2019-09-09-184},
  url = {https://doi.org/10.22331/q-2019-09-09-184},
  title = {Beyond the {C}abello-{S}everini-{W}inter framework: {M}aking sense of contextuality without sharpness of measurements},
  author = {Kunjwal, Ravi},
  journal = {{Quantum}},
  issn = {2521-327X},
  publisher = {{Verein zur F{\"{o}}rderung des Open Access Publizierens in den Quantenwissenschaften}},
  volume = {3},
  pages = {184},
  month = sep,
  year = {2019}
}

@article{sainz14,
  title = {Exploring the local orthogonality principle},
  author = {Sainz, A. B. and Fritz, T. and Augusiak, R. and Brask, J. Bohr and Chaves, R. and Leverrier, A. and Ac\'{\i}n, A.},
  journal = {Phys. Rev. A},
  volume = {89},
  issue = {3},
  pages = {032117},
  numpages = {18},
  year = {2014},
  month = {Mar},
  publisher = {American Physical Society},
  doi = {10.1103/PhysRevA.89.032117},
  url = {https://link.aps.org/doi/10.1103/PhysRevA.89.032117}
}

@misc{cabello12s,
      title={Specker's fundamental principle of quantum mechanics}, 
      author={Adan Cabello},
      year={2012},
      eprint={1212.1756},
      archivePrefix={arXiv},
      primaryClass={quant-ph},
      url={https://arxiv.org/abs/1212.1756}, 
}

@article{baumeler16,
   title={Device-independent test of causal order and relations to fixed-points},
   volume={18},
   ISSN={1367-2630},
   url={http://dx.doi.org/10.1088/1367-2630/18/3/035014},
   DOI={10.1088/1367-2630/18/3/035014},
   number={3},
   journal={New J. Phys.},
   publisher={IOP Publishing},
   author={Baumeler, {\"A}min and Wolf, Stefan},
   year={2016},
   month=apr, pages={035014} }

@article{lanczos1924stationare,
  title={{\"U}ber eine station{\"a}re {K}osmologie im {S}inne der {E}insteinschen {G}ravitationstheorie},
  author={Lanczos, Kornel},
  journal={Z. Phys.},
  volume={21},
  number={1},
  pages={73--110},
  year={1924},
  publisher={Springer}
}

@book{godel49,
	title = "{A Remark About the Relationship Between Relativity Theory and Idealistic Philosophy}",
	year = {1949},
	publisher = {Harper \& Row},
	author = { Kurt G\"odel},
	editor = {Paul Arthur Schilpp},
}

@article{specker,
 author = {Ernst Specker},
 journal = {Dialectica},
 number = {2/3},
 pages = {239--246},
 publisher = {Wiley},
 title = {Die {L}ogik {N}icht {G}leichzeitig {E}ntscheidbarer {A}ussagen},
 volume = {14},
 year = {1960},
 doi          = {10.1111/j.1746-8361.1960.tb00422.x},
 note         = {Reprinted in E.~P. Specker, \emph{Selecta}
                  (Birkh\"{a}user Verlag, Basel, Switzerland, 1990), pp.~175--182.
                 English translation, by M.~P. Seevinck:
                  ``The logic of non-simultaneously decidable propositions'',
                  \href{https://arxiv.org/abs/1103.4537}{arXiv:1103.4537v3 [physics.hist-ph]}.}
}

@misc{liu25,
      title={Parity erasure: a foundational principle for indefinite causal order}, 
      author={Zixuan Liu and Ognyan Oreshkov},
      year={2026},
      eprint={2512.08635},
      archivePrefix={arXiv},
      primaryClass={quant-ph},
      url={https://arxiv.org/abs/2512.08635}, 
}

@misc{costa26,
      title="{Indefinite Quantum Causality}", 
      author={Fabio Costa and Giulia Rubino and Cyril Branciard and Časlav Brukner and Marco Túlio Quintino},
      year={2026},
      eprint={2606.19438},
      archivePrefix={arXiv},
      primaryClass={quant-ph},
      url={https://arxiv.org/abs/2606.19438}, 
}

@article{silva,
  author={{Silva}, R. and Yelena {Guryanova}, Y. and {Short}, A. and {Skrzypczyk}, P. and {Brunner}, N. and  {Popescu}, N.},
  title={Connecting processes with indefinite causal order and multi-time quantum states},
  journal={New J. Phys.},
  volume={19},
  number={10},
  pages={103022},
  url={http://stacks.iop.org/1367-2630/19/i=10/a=103022},
  year={2017},
}

@Article{acin,
author="Ac{\'i}n, A.
and Fritz, T.
and Leverrier, A.
and Sainz, A. B.",
title="A Combinatorial Approach to Nonlocality and Contextuality",
journal="Communications in Mathematical Physics",
year="2015",
month="Mar",
day="01",
volume="334",
number="2",
pages="533--628",
doi="10.1007/s00220-014-2260-1",
url="https://doi.org/10.1007/s00220-014-2260-1",
}

@book{foulisrandall81, title={Empirical logic and tensor products}, volume={5}, abstractNote={In our work we are developing a formalism called empirical logic to support a generalization of conventional statistics; the resulting generalization is called operational statistics. We are not attempting to develop or advocate any particular physical theory; rather we are formulating a precision 'language' in which such theories can be expressed, compared, evaluated, and related to laboratory experiments. We believe that only in such a language can the connections between real physical procedures (operations) and physical theories be made explicit and perspicuous. (orig./HSI)}, publisher={Bibliographisches Inst.}, author={Foulis, D.J. and Randall, C.H.}, year={1981}, pages={p. 9‑20.} }

@article{politzer,
  title = {Path integrals, density matrices, and information flow with closed timelike curves},
  author = {Politzer, H. David},
  journal = {Phys. Rev. D},
  volume = {49},
  issue = {8},
  pages = {3981--3989},
  numpages = {0},
  year = {1994},
  month = {Apr},
  publisher = {American Physical Society},
  doi = {10.1103/PhysRevD.49.3981},
  url = {https://link.aps.org/doi/10.1103/PhysRevD.49.3981}
}

@article{bennett,
  title = {Teleportation, Simulated Time Travel, and How to Flirt with Someone Who Has Fallen into a Black Hole},
  author = {Bennett, C. H. and Schumacher, B.},
  journal = {Talk in QUPON},
  year = {2005},
  url = {http://web.archive.org/web/20060207123032/http://www.research.ibm.com/people/b/bennetc/QUPONBshort.pdf}
}

@Article{svetlichny,
author="Svetlichny, George",
title="Time Travel: Deutsch vs. Teleportation",
journal="Int. J. Theor. Phys.",
year="2011",
month="Dec",
day="01",
volume="50",
number="12",
pages="3903--3914",
issn="1572-9575",
doi="10.1007/s10773-011-0973-x",
url="https://doi.org/10.1007/s10773-011-0973-x"
}

@article{araujo3,
  title = {Quantum computation with indefinite causal structures},
  author = {Ara\'ujo, Mateus and Gu\'erin, Philippe Allard and Baumeler, \"Amin},
  journal = {Phys. Rev. A},
  volume = {96},
  issue = {5},
  pages = {052315},
  numpages = {11},
  year = {2017},
  month = {Nov},
  publisher = {American Physical Society},
  doi = {10.1103/PhysRevA.96.052315},
  url = {https://link.aps.org/doi/10.1103/PhysRevA.96.052315}
}

@article{baumeler2,
	doi = {10.1088/1367-2630/18/1/013036},
	year = {2016},
	month = {jan},
	publisher = {{IOP} Publishing},
	volume = {18},
	number = {1},
	pages = {013036},
	author = {{\"A}min Baumeler and Stefan Wolf},
	title = {The space of logically consistent classical processes without causal order},
	journal = {New J. Phys.},
}

@misc{dourdent25,
      title={Paradox-free classical non-causality and unambiguous non-locality without entanglement are equivalent}, 
      author={Hippolyte Dourdent and Kyrylo Simonov and Andreas Leitherer and Emanuel-Cristian Boghiu and Ravi Kunjwal and Saronath Halder and Remigiusz Augusiak and Antonio Acín},
      year={2025},
      eprint={2512.23599},
      archivePrefix={arXiv},
      primaryClass={quant-ph},
      url={https://arxiv.org/abs/2512.23599}, 
}

@article{wechs,
	doi = {10.1088/1367-2630/aaf352},
	url = {https://doi.org/10.1088/1367-2630/aaf352},
	year = 2019,
	month = {jan},
	publisher = {{IOP} Publishing},
	volume = {21},
	number = {1},
	pages = {013027},
	author = {Julian Wechs and Alastair A Abbott and Cyril Branciard},
	title = {On the definition and characterisation of multipartite causal (non)separability},
	journal = {New Journal of Physics}
}

@article{cabello3,
  title = "{Graph-Theoretic Approach to Quantum Correlations}",
  author = {Cabello, Ad\'an and Severini, Simone and Winter, Andreas},
  journal = {Phys. Rev. Lett.},
  volume = {112},
  issue = {4},
  pages = {040401},
  numpages = {5},
  year = {2014},
  month = {Jan},
  publisher = {American Physical Society},
  doi = {10.1103/PhysRevLett.112.040401},
  url = {https://link.aps.org/doi/10.1103/PhysRevLett.112.040401}
}

@misc{chiribella2014,
      title={Measurement sharpness cuts nonlocality and contextuality in every physical theory}, 
      author={G. Chiribella and X. Yuan},
      year={2014},
      eprint={1404.3348},
      archivePrefix={arXiv},
      primaryClass={quant-ph},
      url={https://arxiv.org/abs/1404.3348}, 
}

@article{coecke2000,
  title={Operational quantum logic: An overview},
  author={Coecke, Bob and Moore, David and Wilce, Alexander},
  journal={Current Research in Operational Quantum Logic: Algebras, Categories, Languages},
  pages={1--36},
  year={2000},
  publisher={Springer},
  url={https://arxiv.org/abs/quant-ph/0008019}
}

@misc{bacciagaluppi2023,
      title="{A Proof of Specker's Principle}", 
      author={Guido Bacciagaluppi},
      year={2023},
      eprint={2305.07917},
      archivePrefix={arXiv},
      primaryClass={quant-ph},
      url={https://arxiv.org/abs/2305.07917}, 
}

@article{chiribella2016,
   title={Bridging the gap between general probabilistic theories and the device-independent framework for nonlocality and contextuality},
   volume={250},
   ISSN={0890-5401},
   url={http://dx.doi.org/10.1016/j.ic.2016.02.006},
   DOI={10.1016/j.ic.2016.02.006},
   journal={Information and Computation},
   publisher={Elsevier BV},
   author={Chiribella, Giulio and Yuan, Xiao},
   year={2016},
   month=Oct, pages={15–49} }

@article{hawking,
  title = {Chronology protection conjecture},
  author = {Hawking, S. W.},
  journal = {Phys. Rev. D},
  volume = {46},
  issue = {2},
  pages = {603--611},
  numpages = {0},
  year = {1992},
  month = {Jul},
  publisher = {American Physical Society},
  doi = {10.1103/PhysRevD.46.603},
  url = {https://link.aps.org/doi/10.1103/PhysRevD.46.603}
}

@article{vilasini24,
   title={Embedding cyclic information-theoretic structures in acyclic space-times: No-go results for indefinite causality},
   volume={110},
   ISSN={2469-9934},
   url={http://dx.doi.org/10.1103/PhysRevA.110.022227},
   DOI={10.1103/physreva.110.022227},
   number={2},
   journal={Phys. Rev. A},
   publisher={American Physical Society (APS)},
   author={Vilasini, V. and Renner, Renato},
   year={2024},
   pages={022227},
   month=aug }

@article{baumeler21,
   title={Equivalence of Grandfather and Information Antinomy Under Intervention},
   volume={340},
   ISSN={2075-2180},
   url={http://dx.doi.org/10.4204/EPTCS.340.1},
   DOI={10.4204/eptcs.340.1},
   journal={Electron. Proc. Theor. Comput. Sci.},
   publisher={Open Publishing Association},
   author={Baumeler, {\"A}min and Tselentis, Eleftherios-Ermis},
   year={2021},
   month={Sep},
   pages={1–12},
}

@misc{kunjwal23,
      title={Nonclassicality in correlations without causal order}, 
      author={Ravi Kunjwal and Ognyan Oreshkov},
      year={2025},
      eprint={2307.02565},
      archivePrefix={arXiv},
      primaryClass={quant-ph},
      url={https://arxiv.org/abs/2307.02565}, 
}

@article{baumeler19,
  title={Reversible time travel with freedom of choice},
  author={Baumeler, {\"A}min and Costa, Fabio and Ralph, Timothy C and Wolf, Stefan and Zych, Magdalena},
  journal={Class. Quantum Grav.},
  volume={36},
  number={22},
  pages={224002},
  year={2019},
  doi={10.1088/1361-6382/ab4973},
  publisher={IOP Publishing}
}

@article{baumeler22,
  title={Unlimited non-causal correlations and their relation to non-locality},
  author={Baumeler, {\"A}min and Gilani, Amin Shiraz and Rashid, Jibran},
  journal={Quantum},
  volume={6},
  pages={673},
  year={2022},
  doi={10.22331/q-2022-03-29-673},
  publisher={Verein zur F{\"o}rderung des Open Access Publizierens in den Quantenwissenschaften}
}

@article{tobar,
	doi = {10.1088/1361-6382/aba4bc},
	url = {https://doi.org/10.1088%2F1361-6382%2Faba4bc},
	year = 2020,
	month = {sep},
	publisher = {{IOP} Publishing},
	volume = {37},
	number = {20},
	pages = {205011},
	author = {Germain Tobar and Fabio Costa},
	title = {Reversible dynamics with closed time-like curves and freedom of choice},
	journal = {Class. Quantum Grav.},
}

@article{almeida2010guess,
  title={Guess your neighbor's input: A multipartite nonlocal game with no quantum advantage},
  author={Almeida, Mafalda L and Bancal, Jean-Daniel and Brunner, Nicolas and Ac{\'\i}n, Antonio and Gisin, Nicolas and Pironio, Stefano},
  journal={Phys. Rev. Lett.},
  volume={104},
  number={23},
  pages={230404},
  year={2010},
  doi={10.1103/PhysRevLett.104.230404},
  publisher={APS}
}

\clearpage
\onecolumngrid
\begin{center}\textbf{\large Supplemental Material}\end{center}

\section{Process functions as process matrices}\label{app:commutred}

Process functions are the deterministic classical counterpart of
process matrices~\cite{oreshkov1}, which are the Choi representations~\cite{choi75} of quantum supermaps that yield valid probabilities under arbitrary quantum operations.\medskip

Each party $k$ performs a single local quantum operation 
\begin{equation}
    M_{|i_k}: \mathcal{L}(\HS^{A^{I}_k}) \times \mathcal{I}_k \rightarrow \mathcal{L}(\HS^{A^{O}_k}) \times \mathcal{O}_k,\label{eq:instru}
\end{equation}
where the classical spaces $\X_k$ and $\A_k$ are replaced by spaces of linear operators $\mathcal{L}(\HS^{A^{I}_k})$ and $\mathcal{L}(\HS^{A^{O}_k})$. Concretely, $M_{|i_k}$ is the Choi operator~\cite{choi75} of a quantum instrument
$M_{|i_k} := (M_{o_k|i_k}^{A^{I}_kA^{O}_k})_{o_k|i_k}$, where each element 
$M_{o_k|i_k}^{A^{I}_kA^{O}_k} \in \mathcal{L}(\HS^{A^{I}_kA^{O}_k}):=\mathcal{L}(\HS^{A^{I}_k}\otimes\HS^{A^{O}_k})$ for any $i_k\in\I_k$ satisfies
$M_{o_k|i_k}^{A^{I}_kA^{O}_k} \ge 0$ and 
$\Tr_{A^{O}_k} \sum_{o_k} M_{o_k|i_k}^{A^{I}_kA^{O}_k} = \openone^{A^{I}_k}$. \\

With the notations  $\mathbf{A} = A_1\ldots A_n$, $A_k = A_k^{I} A_k^{O}$,  $\bm{o} = (o_1, \ldots, o_n)$ and $\bm{i} = (i_1, \ldots, i_n)$,
within the process matrix framework, the correlations established by the parties are given by the probabilities

\begin{equation}
    P(\bm{o}|\bm{i})=\Tr\left[
\big(\bigotimes_{k=1}^n M_{o_k|i_k}^{A_k}\big)^T\cdot
W^\mathbf{A}
\right]=\bigotimes_{k=1}^n M_{o_k|i_k}^{A_k}*W^\mathbf{A},\label{eq:probapm0}
\end{equation}
where $W\in\mathcal{L}(\HS^{\mathbf{A}})$ is the ``process matrix", and $*$ denotes the ``link product''~\cite{chiribella08,chiribella09}. For two operators $M\in\L(\mathcal{H}^{XY})$ and $N\in\L(\mathcal{H}^{YZ})$, the latter is defined as $M*N=\Tr_{Y} [(M^{T_Y}\otimes\id^{Z})(\id^{X}\otimes N)] $ where $T_Y$ denotes  partial transposition on  $Y$. This product is both commutative (up to reordering of tensor products) and associative (as long as each space involved in the product appears at most twice); a tensor product $M^X\otimes N^Y$ is a special case with trivial link.
\medskip

For Eq.~\eqref{eq:probapm0} to define valid probabilities under any arbitrary operations $M_{|i_k}^{A_k}$, the process matrix $W^\mathbf{A}$ must be positive semidefinite $W^\mathbf{A}\geq 0$ and belong to a nontrivial subspace $\mathcal{L}^{\mathcal{N}}$
of $\mathcal{L}(\HS^{\mathbf{A}})$~\cite{araujo1,oreshkov16,wechs}:
\begin{equation}
    W\in\mathcal{L}^{\mathcal{N}} \Leftrightarrow \forall \mathcal{K}\subseteq\mathcal{N}, \forall \mathcal{K}\neq \emptyset,\hspace{3mm} _{\prod_{k\in\mathcal{K}}[1-A_k^{O}]A_{\mathcal{N}\backslash\mathcal{K}}^{IO}}W=0,
\end{equation}
with the trace-and-replace notation
\begin{eqnarray}\label{eq:trnrep}
    _{[1-X]}W &=& W - _XW,\notag \\
    \label{eq:TeR2} _XW &=& \frac{\id^{X}}{d_X}\otimes \Tr_X W.
\end{eqnarray}

By \cite[Theorem~4]{baumeler22}, any process function $\w: \mathcal{A} \to \mathcal{X}$ admits a unique encoding as  a process matrix:
\begin{equation}\label{eq:classicalPM}
    W^{\mathbf{A}}_{\w}=\sum_{\a} \bigotimes_k \ketbra{a_k}{a_k}^{A_{k}^{O}}\otimes\ketbra{w_k(\a_{\backslash k})}{w_k(\a_{\backslash k})}^{A_{k}^{I}}.
\end{equation}
Setting $|\I_k|=|\mathcal{O}_k|=1$, arbitrary local operations $f_k:\mathcal{X}_k\rightarrow \mathcal{A}_k$ and $g_{j\neq k}:\mathcal{X}_j\rightarrow \mathcal{A}_j$ are encoded as quantum channels 
\begin{align}
M_{k}^{A_k}&=\sum_{x_k\in\X_k}\ketbra{x_k}{x_k}^{A_{k}^{I}}\otimes
\ketbra{f_k(x_{k})}{f_k(x_{k})}^{A_{k}^{O}},\\
M_{j}^{A_j}&=  \sum_{x_j\in\X_j}\ketbra{x_j}{x_j}^{A_{j}^{I}}\otimes
\ketbra{g_j(x_{j})}{g_j(x_{j})}^{A_{j}^{O}}.
\end{align}

\section{From the no global loop to phase erasure constraints (Theorem \ref{th:outputredphase})}
\label{app:noglobaloop}

Let us show that the process matrix constraint Eq.~\eqref{eq:nogloballoop} under the matrix encoding Eq.~\eqref{eq:classicalPM}
translates into the phase erasure conditions Eq.~\eqref{eq:phaseerasure}.

\begin{proof}
Denote 
  $W_{\w}
  = \sum_{\bm a\in\mathcal A}
  P_{\bm a}\otimes Q_{\w(\bm a)}$,
where $
  P_{\bm a}
  :=
  \bigotimes_{k\in\mathcal N}
  \ketbra{a_k}{a_k}^{A_k^O},$ $Q_{\bm x}
  :=
  \bigotimes_{k\in\mathcal N}
  \ketbra{x_k}{x_k}^{A_k^I}.$
For each party $k$, define the average of
$\bm a\mapsto Q_{\bm w(\bm a)}$ over $a_k$, and the centering $C_k$ by
\begin{equation}
    (\mathbb E_k Q_{\bm w})(\bm a)
  :=
  \frac{1}{d_k}
  \sum_{a_k'=0}^{d_k-1}
  Q_{\bm w(a_k',\bm a_{\backslash k})},\hspace{3mm}
  \mathsf C_k:=\mathrm{id}-\mathbb E_k.
\end{equation}
 Using the trace-and-replace map  (Eq.~\eqref{eq:trnrep}) on $A_k^O$ and
\begin{equation}
    \Tr\ketbra{a_k}{a_k}=1,
  \qquad
  \frac{\mathbbm 1^{A_k^O}}{d_k}
  =
  \frac{1}{d_k}
  \sum_{a_k=0}^{d_k-1}
  \ketbra{a_k}{a_k}^{A_k^O},
\end{equation}
we obtain
\begin{align}
  {}_{A_k^O}W_{\bm w}
  &=
  \frac{1}{d_k}
  \sum_{\bm a_{\backslash k}}
  \sum_{a_k,a_k'}
  P_{(a_k,\bm a_{\backslash k})}
  \otimes
  Q_{\bm w(a_k',\bm a_{\backslash k})}
  \notag\\
  &=
  \sum_{\bm a}
  P_{\bm a}\otimes
  (\mathbb E_kQ_{\bm w})(\bm a).
\end{align}
Consequently,
\begin{equation}
     _{[1-{A_k^O}]}W_{\bm w}
  =
  \sum_{\bm a}
  P_{\bm a}\otimes
  (\mathsf C_kQ_{\bm w})(\bm a). 
\end{equation}
The averaging operators commute, and hence so do the centering operators.
It follows that
\begin{equation}
     {}_{\prod_k[1-A_k^O]}W_{\bm w}
  =
  \sum_{\bm a}
  P_{\bm a}\otimes
  \left(\prod_{k\in\mathcal N}\mathsf C_k\right)
  Q_{\bm w}(\bm a). \label{eq:pmtolin}
\end{equation}
Since the operators $P_{\bm a}$ are linearly independent,
Eq.~\eqref{eq:nogloballoop} is equivalent to
\begin{equation}
\left(\prod_{k\in\mathcal N}\mathsf C_k\right)
  Q_{\bm w}(\bm a)=0
  \qquad\forall\a\in\A.
\end{equation}
Moreover, $Q_{\bm w(\bm a)}
  =\sum_{\bm x\in\mathcal X}
  \delta_{\bm w(\bm a),\bm x}\,Q_{\bm x}$,
and the operators $Q_{\bm x}$ are linearly independent. Thus the matrix constraint separates into the scalar conditions
\begin{equation}
  \left(\prod_{k\in\mathcal N}\mathsf C_k\right)
  F_{\bm x}=0
  \qquad\forall\,\bm x\in\mathcal X,
\label{eq:centeredconstraint}
\end{equation}
where $ F_{\bm x}(\bm a):=\delta_{\bm w(\bm a),\bm x}.$

It remains to express Eq.~\eqref{eq:centeredconstraint} in the complex Fourier
basis. Let us remark that the latter has no intrinsic physical significance here;
it is merely a convenient simultaneous eigenbasis for the averaging
operators $\mathbb E_k$. 
The essential fact is that
$\mathsf C_k=\mathrm{id}-\mathbb E_k$ projects onto the subspace of
functions that are non-constant in $a_k$. Thus
$\prod_k\mathsf C_k$ selects the component that depends jointly on all outputs.\medskip

Label each output alphabet as
$\mathcal A_k\simeq\mathbb Z_{d_k}$ and define the characters
\begin{equation}
 \chi_{\bm\nu}(\bm a)
  :=
  \prod_{k=1}^n
  e^{2\pi i\nu_k a_k/d_k},
  \hspace{3mm}
  \nu_k\in\{0,\ldots,d_k-1\}.   
\end{equation}
The Fourier coefficients of $F_{\bm x}$ are
\begin{equation}
    \widehat F_{\bm x}(\bm\nu)
  :=
  \frac{1}{|\mathcal A|}
  \sum_{\bm a\in\mathcal A}
  F_{\bm x}(\bm a)\,
  \chi_{\bm\nu}(\bm a).
\end{equation}
Orthogonality,
$\frac{1}{d_k}
  \sum_{a_k=0}^{d_k-1}
  e^{2\pi i\nu_k a_k/d_k}
  = \delta_{\nu_k,0}$,
shows that $\mathbb E_k$ retains precisely the Fourier modes with $\nu_k=0$. Consequently, $\mathsf C_k=\mathrm{id}-\mathbb E_k$
retains precisely those modes with $\nu_k\neq0$. The product
$\prod_k\mathsf C_k$ therefore retains exactly the Fourier modes for
which every frequency is nonzero.

Equation~\eqref{eq:centeredconstraint} is thus equivalent to
\begin{equation}
    \widehat F_{\bm x}(\bm\nu)=0
  \hspace{2mm}
  \forall\,\bm x\in\mathcal X,
   \hspace{2mm}
  \forall\,\bm\nu\text{ such that }\nu_k\neq0\ \forall k.  
\end{equation}
Substituting
$F_{\bm x}(\bm a)=\delta_{\bm w(\bm a),\bm x}$ and omitting the
irrelevant normalization factor $|\mathcal A|^{-1}$ gives
\begin{equation}
  \sum_{\bm a}
  \delta_{\bm w(\bm a),\bm x}
  \prod_{k=1}^n
  e^{2\pi i\nu_k a_k/d_k}
  =0,
\end{equation}
for every $\bm x$ and every
$\bm\nu\in\prod_k\{1,\ldots,d_k-1\}$.

This is precisely
Eq.~\eqref{eq:phaseerasure}.
\end{proof}

For binary outputs, the result immediately reduces to the parity condition, since the only nonzero frequency is $\nu_k=1$ and $e^{2\pi i a_k/2}=(-1)^{a_k}.$ Hence Eq.~\eqref{eq:phaseerasure} becomes
\begin{equation}
    \sum_{\bm a}
\delta_{\bm w(\bm a),\bm x}
(-1)^{\sum_k a_k}=0
\qquad\forall\,\bm x.
\end{equation}

\section{Proof of Theorem \ref{th:phaseerasure}}

We show that, for a classical process matrix $W_{\bm w}$, the complete
family of process-matrix constraints
\begin{equation}
  {}_{\prod_{k\in\mathcal K}[1-A_k^O]\,
  A_{\mathcal N\backslash\mathcal K}^{IO}}
  W_{\bm w}=0
  \quad
  \forall\,\emptyset\neq\mathcal K\subseteq\mathcal N
  \label{eq:allPMconstraints}
\end{equation}
is equivalent to the phase-erasure conditions
\begin{equation}
  \sum_{\bm a\in\mathcal A}
  \prod_{k\in\mathcal K}
  \delta_{w_k(\bm a),x_k}\,
  e^{2\pi i\nu_k a_k/d_k}
  =0
\end{equation}
for every nonempty $\mathcal K\subseteq\mathcal N$, every
$\bm x_{\mathcal K}\in\mathcal X_{\mathcal K}$, and every collection of
nonzero frequencies $\bm\nu_{\mathcal K}
  \in\prod_{k\in\mathcal K}\{1,\ldots,d_k{-}1\}.$

\begin{proof}
The argument is the same as for the full-set constraint
$\mathcal K=\mathcal N$. Fix a nonempty subset
$\mathcal K\subseteq\mathcal N$. Applying
${}_{A_{\mathcal N\backslash\mathcal K}^{IO}}(\cdot)$ traces and replaces the
systems of the complementary parties. Ignoring nonzero normalization
factors and identity operators, the resulting diagonal matrix has
coefficients
 $F_{\bm x_{\mathcal K}}(\bm a_{\mathcal K})
  :=
  \sum_{\bm a_{\mathcal N\backslash\mathcal K}}
  \prod_{k\in\mathcal K}
  \delta_{w_k(\bm a),x_k}.$
Thus, the process-matrix constraint indexed by $\mathcal K$ becomes
\begin{equation}
 \left(\prod_{k\in\mathcal K}\mathsf C_k\right)
  F_{\bm x_{\mathcal K}}=0
  \qquad
  \forall\,\bm x_{\mathcal K},   
\end{equation}
where $\mathsf C_k=\mathrm{id}-\mathbb E_k$ removes the part that is
constant in $a_k$.

As shown for $\mathcal K=\mathcal N$, the product
$\prod_{k\in\mathcal K}\mathsf C_k$ selects precisely the Fourier terms
whose frequencies are nonzero for every $k\in\mathcal K$. The above
condition is therefore equivalent to
\begin{equation}
  \sum_{\bm a_{\mathcal K}}
  F_{\bm x_{\mathcal K}}(\bm a_{\mathcal K})
  \exp\!\left(
    2\pi i\sum_{k\in\mathcal K}\frac{\nu_k a_k}{d_k}
  \right)
  =0
\end{equation}
for every $\bm x_{\mathcal K}$ and every
$\nu_k\in\{1,\ldots,d_k-1\}$, \(k\in\mathcal K\).

Substituting the definition of
$F_{\bm x_{\mathcal K}}$ and combining the sums over
$\bm a_{\mathcal K}$ and
$\bm a_{\mathcal N\backslash\mathcal K}$ gives
\begin{align}
  0&=\sum_{\bm a\in\mathcal A}
  \left(
    \prod_{k\in\mathcal K}
    \delta_{w_k(\bm a),x_k}
  \right)
  \exp\!\left(
    2\pi i\sum_{k\in\mathcal K}\frac{\nu_k a_k}{d_k}
  \right)
  \notag\\
&=\prod_{k\in\mathcal K}
  \delta_{w_k(\bm a),x_k}\,
  e^{2\pi i\nu_k a_k/d_k}.
\end{align}
Since this holds for every nonempty
$\mathcal K\subseteq\mathcal N$, the complete family of process-matrix
constraints is equivalent to a complete family of phase-erasure conditions.
\end{proof}

\end{document}